\title{On Decidability of Time-Bounded Reachability in CTMDPs} 
\titlerunning{On Decidability of Time-Bounded Reachability in CTMDPs} 
\author{Rupak Majumdar}{MPI-SWS, Kaiserslautern, Germany}{rupak@mpi-sws.org}{https://orcid.org/0000-0003-2136-0542}{}
\author{Mahmoud Salamati}{MPI-SWS, Kaiserslautern, Germany}{msalamati@mpi-sws.org}{https://orcid.org/0000-0003-3790-3935}{}
\author{Sadegh Soudjani}{Newcastle University, Newcastle upon Tyne, United Kingdom}{sadegh.soudjani@newcastle.ac.uk}{https://orcid.org/0000-0003-1922-6678}{}
\authorrunning{R. Majumdar, M. Salamati, and S. Soudjani} 
\keywords{CTMDP, Time bounded reachability, Continuous Skolem Problem, Schanuel's Conjecture} 
\newcommand{\Mahmoud}[1]{\textcolor{blue}{#1}}
\newcommand{\RM}[1]{\textcolor{red}{#1}}
\newcommand{\dv}{\mathbf{d}\xspace}
\newcommand{\Qb}{\mathbf{Q}\xspace}
\newcommand{\reach}{\textbf{reach}\xspace}
\newcommand{\Stationary}{\Pi_{\mathfrak{st}}}
\newtheorem{problem}{Problem}
\def\reals{\mathbb{R}}
\def\nats{\mathbb{N}}
\def\rats{\mathbb{Q}}
\def\theory{{\mathbb{R}_{\mathsf{MW}}}}
\def\set#1{{\{ #1 \}}}
\def\tuple#1{{\langle #1 \rangle}}
\def\Prob{{\mathrm{Prob}}}
\def\M{{\mathcal{M}}}
\def\N{{\mathcal{N}}}
\def\good{\mathbf{good}}
\def\bad{\mathbf{bad}}
\begin{document}
\maketitle

\begin{abstract}
We consider the time-bounded reachability problem for continuous-time Markov decision processes.
We show that the problem is decidable subject to Schanuel's conjecture.
Our decision procedure relies on the structure of optimal policies and the conditional decidability 
(under Schanuel's conjecture) of
the theory of reals extended with exponential and trigonometric functions over bounded domains.
We further show that any unconditional decidability result would imply unconditional decidability of
the bounded continuous Skolem problem, or equivalently, the problem of checking if an exponential
polynomial has a non-tangential zero in a bounded interval.
We note that the latter problems are also decidable subject to Schanuel's conjecture but
finding unconditional decision procedures remain longstanding open problems.
\end{abstract}

\clearpage


\section{Introduction}
\label{sec:Intro}

Continuous-time Markov decision processes (CTMDPs) are a widely used model for continuous-time
systems which exhibit both stochastic and non-deterministic choice.
A CTMDP consists of a finite set of states, a finite set of actions, and for each action,
a transition rate matrix that determines the rate (in an exponential distribution in continuous time)
to go from one state to the next when the action is chosen. 
A \emph{policy} for a CTMDP maps a timed execution path to state-dependent actions.
Given a fixed policy, a CTMDP determines a stochastic process in continuous time, where
the rate matrix determines the distribution of switches.

A fundamental decision problem for CTMDPs is the 
\emph{time-bounded reachability problem}, which asks, given a CTMDP $\M$ with a designated ``$\good$'' state,
a time bound $B$, and a rational vector $r$, whether there exists a policy that 
controls the Markov decision process such that the probability of reaching
the good state from state $s$ within time bound $B$ is at least $r(s)$.
The time-bounded reachability problem is at the core of model checking CTMDPs with respect to 
stochastic temporal logics \cite{Baier2005} and has been extensively 
studied \cite{Buchholz2011,Neuhausser2010,Wolovick2006,Neuhuer2009,Hermanns:2011}.

Existing papers either consider \emph{time-abstract} policies 
\cite{Baier2005,Rabe2013,Brzdil2013,Wolovick2006,Neuhuer2009}
or focus on numerical approximation schemes \cite{Buchholz2011,Neuhausser2010,MedinaAyala2014,Fearnley2016,Hermanns:2011,Mahmoud2020_TOMPECS}.
However, policies that depend on time are strictly more powerful and the \emph{decision problem} has remained open.
For the special case of continuous-time Markov chains (CTMCs), where each state has a unique
action, the time-bounded reachability problem is decidable \cite{Aziz2000}.
The proof uses tools from transcendental number theory, specifically, the Lindemann-Weierstrass theorem.
One might expect that a similar argument might be used to show decidability for CTMDPs as well.

In this paper, we show \emph{conditional} decidability.
Our result uses, like several other conditional results on dynamical systems, Schanuel's conjecture from
transcendental number theory (see, e.g., \cite{Lang1966}).
Our proof has the following ingredients.
First, we use the fact that the optimal policy for the time-bounded reachability problem is a timed,
\emph{piecewise constant} function with a \emph{finite} number of switches \cite{Miller_68,Neuhuer2010,Rabe:2011}.
We show that each switch point of an optimal policy corresponds to a non-tangential zero of an associated linear dynamical system.
Second, we use the result of Macintyre and Wilkie \cite{MACINTYRE2008,Macintyre2016} 
that Schanuel's conjecture implies the 
decidability of the real-closed field together with the exponential, sine, and cosine functions over a bounded domain.
The existence of non-tangential zeros of linear dynamical systems can be encoded in this theory.
Third, for each natural number $k\in \nats$,
we write a sentence in this theory whose validity implies there is an optimal strategy with exactly $k$ switch points.
By enumerating over $k$, we find the exact number of switches in an optimal strategy.
Finally, we write another sentence in the theory that checks if the reachability probability attained by 
(an encoding of) the optimal policy is greater than the given bound.

We also study the related decision problem whether there is a \emph{stationary} (i.e., time independent) optimal policy.
We show that there is a ``direct'' conditional decision procedure for this problem based on Schanuel's conjecture and
recent results on zeros of exponential polynomials \cite{joel2016}, which avoids the result of Macintyre and Wilkie.

At the same time, we show that an \emph{unconditional} decidability result is likely to be very difficult.
We show that the bounded continuous-time Skolem problem \cite{Bell2010,joel2016} reduces to checking if there is an optimal
stationary policy in the time-bounded CTMDP problem. 
The bounded continuous Skolem problem is a long-standing open problem about linear dynamical systems
\cite{joel2016,Bell2010}; it asks if a linear dynamical system in continuous 
time has a non-tangential zero in a bounded interval.
%
Our reduction, in essence, demonstrates that CTMDPs can ``simulate'' any linear dynamical system: a non-tangential 
zero in the dynamics corresponds to a policy switch point in the simulating CTMDP.

Our result is in the same spirit as several recent results providing conditional decision procedures, based
on Schanuel's conjecture, or hardness results, based on variants of the Skolem problem, for problems on probabilistic systems.
For example, Daviaud~et~al.~\cite{ben2018} showed conditional decidability of subcases of 
the containment problem for probabilistic automata subject to the conditional decidability of the theory of
real closed fields with the exponential function \cite{Macintyre1996,Wilkie1997}.
For lower bounds, Akshay~et~al. \cite{Akshay2015} showed a reduction from the (unbounded, discrete) Skolem problem to 
reachability on discrete time Markov chains and
Piribauer and Baier \cite{Baier2020} show that the positivity problem in discrete time can be reduced 
into several decision problems corresponding to optimization tasks over discrete time MDPs. 

In summary, we summarize our contribution as the following theorem.

\begin{theorem}
\label{th:main-theorem}
(1)~The time-bounded reachability problem for CTMDPs is decidable assuming Schanuel's conjecture.
(2)~Whether the time-bounded reachability problem has a stationary optimal policy is decidable assuming Schanuel's conjecture.
(3)~The bounded continuous Skolem problem reduces to checking if the time-bounded reachability problem has a stationary optimal policy. 
\end{theorem}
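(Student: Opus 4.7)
The plan is to prove the three parts of the theorem around a shared theme: optimal policies in CTMDPs are piecewise constant with finitely many switches, and their switch structure is encoded by non-tangential zeros of exponential polynomials arising from the Kolmogorov equations of the underlying continuous-time chain. Such zero conditions are definable in the theory $\theory$ of reals extended with $\exp$, $\sin$, and $\cos$ on a bounded domain, which is decidable under Schanuel's conjecture by Macintyre and Wilkie.

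For Part~(1), I would first invoke the structural theorem of Miller, Neuh\"au{\ss}er, and Rabe guaranteeing the existence of a time-bounded optimal policy that is piecewise constant in the remaining time budget and has finitely many switches. The next step is to characterize each switch: the difference $g_a(t) - g_{a'}(t)$ of the value functions for two actions is an exponential polynomial, and a switch from $a'$ to $a$ occurs precisely at a non-tangential zero of this function. Then, for each $k \in \nats$, I would write a sentence in $\theory$ existentially quantifying over switch times $0 \le t_1 < \cdots < t_k \le B$ and action assignments, and asserting that the resulting policy satisfies the Bellman optimality condition at each switch and achieves reachability probability at least $r$. Dovetailing over $k$ yields a semi-decision procedure for ``yes''; combining it with a dual formula in $\theory$ certifying that the optimum over $\le k$-switch policies is strictly below $r$ (which, once $k$ reaches the true switch count of an optimal policy, becomes a certificate for ``no'' via the structural theorem) yields full decidability.

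For Part~(2), a stationary policy fixes an action per state, reducing the CTMDP to a CTMC whose transient probabilities are exponential polynomials. To decide whether some stationary policy is globally optimal, one approach is to write a sentence in $\theory$ asserting the existence of a stationary $\pi$ such that no $k$-switch policy strictly improves it, and dovetail over $k$ as in Part~(1). A more direct route, suggested by the introduction, avoids Macintyre--Wilkie: stationarity of an optimum is equivalent to a finite family of exponential polynomials (one per state-action pair, built from the value-function differences along $\pi$) having no non-tangential zero in $[0, B]$, and the latter property is decidable under Schanuel's conjecture by the Chonev--Ouaknine--Worrell results on zeros of exponential polynomials.

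For Part~(3), given an instance of bounded continuous Skolem --- linear dynamics $\dot{x} = A x$, initial value $x_0$, projection $c$, bound $B$, and the question whether $f(t) = c^\top e^{A t} x_0$ has a non-tangential zero in $[0, B]$ --- I would construct a CTMDP with two actions whose time-bounded reachability probabilities to $\good$ track $\tfrac{1}{2} + \varepsilon f(t)$ and $\tfrac{1}{2} - \varepsilon f(t)$ respectively, for a small rational $\varepsilon > 0$. A stationary optimal policy then exists iff $f$ does not change sign non-tangentially on $[0, B]$. The main technical hurdle is embedding an arbitrary rational matrix $A$ into legal CTMDP transition rates (non-negative off-diagonal, rows summing to zero), which typically requires lifting to a larger state space encoding complex eigenvalues via block structure and absorbing any additive shifts into a fresh absorbing sink. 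I expect the most delicate step overall to be the encoding of $k$-switch optimality as a first-order sentence in $\theory$ together with the termination argument for dovetailing, with the faithful embedding of an LDS into a CTMDP in Part~(3) being the second main obstacle.
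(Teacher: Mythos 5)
Your overall strategy coincides with the paper's on all three parts: switch points of a piecewise-constant optimal policy are non-tangential zeros of exponential polynomials, these are expressible in $\theory$ and hence decidable under Schanuel's conjecture, and hardness comes from simulating a linear ODE by a two-action CTMDP. There are, however, three concrete places where the sketch as written would fail or omits a needed idea. In Part~(1), your dovetailing certificates do not cover the boundary case: the problem asks whether $\sup_\pi \mathbf P_s^\pi(\reach) > r(s)$, and if the supremum equals $r(s)$ exactly, your ``yes'' certificate (value at least $r$) fires although the correct answer is ``no'', while your ``no'' certificate (value strictly below $r$) never fires. The paper does not need two semi-procedures at all: since $\theory$ is decidable, it first pins down the \emph{exact} number of switches and the sequence of decision vectors by deciding, for each candidate prefix, a conjunction of $\mathsf{Switch}$ predicates followed by a $\mathsf{NoSwitch}$ predicate (the loop terminates because the optimal policy has finitely many switches), and only then issues a single decidable query for the strict inequality. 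Relatedly, ``Bellman optimality at each switch'' has to be formalized carefully: a switch is a \emph{simultaneous, first} non-tangential zero for all states whose action changes, together with the absence of any non-tangential zero up to and including that time for states whose action is unchanged, and one must argue (as in the paper's proposition on multiple optimal decisions) that ties in the optimality sets are harmless.

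In Part~(2), the Chonev--Ouaknine--Worrell theorem decides existence of \emph{a} zero, not of a non-tangential one, so your ``direct'' route needs an extra classification step; the paper supplies it by testing, for increasing $k$, whether the exponential polynomial $y_t^2+\sum_{j\le k}\bigl(\tfrac{d^j}{dt^j}y_t\bigr)^2$ has a zero in the interval and reading tangentiality off the parity of the first $k$ for which it does not. In Part~(3), the quantity that must track the Skolem function $f$ is not the reachability probability of the two stationary policies but the difference of Bellman backups $\Qb^{\dv^a}W_t^\pi-\Qb^{\dv^b}W_t^\pi$ along the optimal value trajectory, which before the first switch evolves under $\Qb^{\dv^a}$ alone; moreover the sign of the rate perturbation between the two actions must be chosen (from the first nonzero term of the sequence $\bar C[\Qb^a]^k\bar Y_0$) so that action $a$ is optimal at $t=0$, since otherwise a non-tangential zero of $f$ need not force a switch. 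Your identification of the state-doubling plus diagonal-shift construction for turning an arbitrary rational matrix into a generator matrix is exactly the paper's route, so with these repairs the argument goes through.
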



\section{Continuous Time Markov Decision Processes}
\label{sec:preliminaries}

\begin{definition}
	A continuous-time Markov decision process (CTMDP) is a tuple $\M = (S,\mathcal D,\Qb)$ where
	\begin{itemize}
	\item $S = \{1,2,\ldots,\mathfrak n\}$ is a finite set of states for some $\mathfrak{n} > 0$;
	\item a set $\mathcal D = \prod_{s=1}^{\mathfrak n}\mathcal D_s$ of \emph{decision vectors}, 
              where $\mathcal D_s$ is a finite set of \emph{actions} that can be taken in state $s\in S$;
	\item $\Qb$ is a $\mathcal D$-indexed family of $\mathfrak n\times \mathfrak n$ \emph{generator matrices};
	      we write $\Qb^{\dv}$ for the generator matrix corresponding to
              the decision vector $\dv\in \mathcal D$.
              The entry $\Qb^{\dv}(s,s')\ge 0$ for $s'\neq s$ gives the \emph{rate of transition} from state 
              $s$ to state $s'$ under action $\dv(s)$, and $\Qb^{\dv}(s,s')$ is independent of elements of $\dv$ except $\dv(s)$.
              The entry $\Qb^{\dv}(s,s) = -\sum_{s'\ne s} \Qb^{\dv}(s,s')$.
	\end{itemize} 
\end{definition}

A CTMDP $\M = (S,\mathcal D,\Qb)$ with $|\mathcal D|=1$, i.e., when only a unique action can be taken in each state, 
is called a \emph{continuous-time Markov chain} (CTMC) and is simply denoted by the tuple $(S,\Qb)$, and with abuse of notation,
we also write $\Qb$ for the unique generator matrix. 
The CTMDP $\M$ reduces to a CTMC whenever a decision vector $\dv$ is fixed for all time on the CTMDP.  

Intuitively, $\Qb^{\dv}(s,s')>0$ indicates that by fixing a decision vector $\dv$, a transition from $s$ to $s'$ is possible and that the timing 
of the transition is exponentially distributed with rate $\Qb^{\dv}(s,s')$. 
If there are several states $s'$ such that $\Qb^{\dv}(s,s') > 0$, more than one transition will be possible. 
For each decision vector $\dv\in \mathcal D$ and any $s\in S$,
the total rate of taking an outgoing transition from state $s$ when $\dv$ is fixed is given by 
$E_{\dv}(s)=\sum_{s'\ne s} \Qb^{\dv}(s,s')$, 
By fixing this decision vector $\dv$, a transition from a state $s$ into $s'$ occurs within time $t$ with probability
\begin{equation*}
\mathbf{P}(s,s',t)=\frac{\Qb^{\dv}(s,s')}{E_{\dv}(s)}.(1-e^{-E_{\dv}(s)t}),\quad t\ge 0.
\end{equation*}
Intuitively, $1-e^{-E_{\dv}(s)t}$ is the probability of taking an outgoing transition at $s$ within time $t$ 
(exponentially distributed with rate $E_{\dv}(s)$) and 
$\Qb^{\dv}(s,s')/E_{\dv}(s)$ is the probability of taking transition to $s'$ among possible next states at $s$. 
Thus, the total probability of moving from $s$ to $s'$ under the decision $\dv$ in one transition, written $\mathbf{P}_{\dv}(s,s')$ is $\Qb^{\dv}(s,s')/E_{\dv}(s)$.
A state $s\in S$ is called \emph{absorbing} if and only if $\Qb^{\dv}(s,s') = 0$ for all $s'\in S$ and all 
decision vectors $\dv\in \mathcal D$.
For an absorbing state, we have $E_{\dv}(s) = 0$ for any decision vector $\dv$ and no transitions are enabled. 
The initial state of a CTMDP is either fixed deterministically or selected randomly according to a probability distribution $\alpha$ over the set of states $S$.

Consider a time interval $[0,B]$ with time bound $B>0$. 
Let $\varOmega$ denote the set of all right-continuous step functions $f:[0,B]\rightarrow S$, 
i.e., there are time points $t_0=0<t_1<t_2<\ldots<t_{m}=B$ such that $f(t') = f(t'')$ for all $t',t''\in[t_i,t_{i+1})$ for all $i\in\{0,1,\ldots,m-1\}$. 
Let $\mathcal F$ denote the sigma-algebra of the \emph{cylinder sets}
\begin{equation}
\label{eq:cylinder}
\textsf{Cyl}(s_0,I_0,\ldots,I_{m-1}, s_m): = \{f\in\varOmega\,|\, \forall 0\le i\le m \,\cdot\, f(t_i) = s_i \,\wedge\, i<m \Rightarrow (t_{i+1}-t_i)\in I_i\}.
\end{equation}
for all $m$, $s_i\in S$ and non-empty time intervals $I_0,I_1,\ldots,I_{m-1}\subset [0,B]$.
\begin{definition}
\label{def:policy}
A \emph{policy} $\pi$ is a function from $[0,B]$ into $\mathcal D$, which is assumed to be Lebesgue measurable. 
Any policy gives a decision vector $\pi_t\in \mathcal D$ at time $t$ such that the action $\pi_t(s)$ is taken when the CTMDP is at 
state $s$ at time $t$. The set of all such polices is denoted by $\Pi_B$. 
\end{definition}
Any policy $\pi$ together with an initial distribution $\alpha$ induces the probability space 
$(\varOmega,\mathcal F, \mathbf P_{\alpha}^\pi)$. 
If the initial distribution is chosen deterministically as $s\in S$, we denote the probability measure by $\mathbf P_{s}^\pi$ instead of $\mathbf P_{\alpha}^\pi$.

A policy $\pi:[0,B]\rightarrow\mathcal D$ is \emph{piecewise constant} if there exist a number $m\in \nats$ and time points
$t_0=0<t_1<t_2<\ldots<t_{m}=B$ such that $\pi_{t'} = \pi_{t''}$ for all $t',t''\in(t_i,t_{i+1}]$ and all $i\in\{0,1,\ldots,m-1\}$. 
The policy is \emph{stationary} if $m=1$. We denote the class of stationary policies by $\Stationary$; observe
that a stationary policy is given by a fixed decision vector, so $\Stationary$ is isomorphic with the set of
decision vectors $\mathcal{D}$.
In particular, it is a finite set.

\begin{remark}
The policies in Def.~\ref{def:policy} are called \emph{timed positional} policies since the action is selected deterministically as a function of time 
and the state of the CTMDP at that time. 
A stationary policy is only positional since the selected action is independent of time.
\end{remark}

\begin{problem}
\label{prob:prob_main}
Consider a CTMDP $\M = (\set{1,\ldots,n} \uplus\set{\good},\mathcal D, \Qb)$ with a distinguished absorbing state named $\good$ and
a time bound $B>0$. Define the event
\begin{equation}
\label{eq:reach_event}
\reach := \cup\{f\in\varOmega\mid f(t)=\good\text{ for some } t\in[0,B]\}.
\end{equation}
The \emph{time-bounded reachability problem} asks if for a rational vector $r \in [0,1]^n$, we have
	\begin{equation*}
	\label{eq:reach} 
		\sup_{\pi\in\Pi_B} \mathbf P_s^\pi(\reach)>r(s), \quad \text{ for all } s\in \set{1,\ldots,n}.
	\end{equation*}
\end{problem}
The event $\reach$ defined in \eqref{eq:reach_event} is written as a union of an uncountable number of functions but it is measurable in the probability space $(\varOmega,\mathcal F, \mathbf P_{\alpha}^\pi)$ for any $\alpha$. Since the state space is finite, $\reach$ can be written as a countable union of cylinder sets in the form of \eqref{eq:cylinder} by taking all the time intervals to be $[0,B]$ and enumerating over all possible sequence of states (which is countable) \cite{BK08}.   

A policy $\pi^* \in \Pi_B$ is \emph{optimal} if $P_s^{\pi^*}(\reach) = \sup_{\pi\in\Pi_B} \mathbf P_s^\pi(\reach)$.
Note that there are more general classes of policies that may depend also on the history of the states in the previous time points and which map the history
to a distribution over $\mathcal D$.
It is shown that piecewise constant timed positional policies are sufficient for the optimal reachability probability \cite{Miller_68,Neuhuer2010,Rabe:2011}.
That is, if there is an optimal policy from the larger class of policies, there is already one from the class of piecewise constant, timed, positional policies.

A closely related problem is the existence of \emph{stationary} optimal policies; here, it is possible that the optimal
stationary policy performs strictly worse than an optimal policy.

\begin{problem}
\label{prob:prob_switch}
Consider a CTMDP $\M = (\set{1,\ldots,n} \uplus\set{\good},\mathcal D, \Qb)$
and a time bound $B>0$. Decide whether there is an optimal policy $\pi^*$ that is stationary, namely
	\begin{equation*}
	\label{eq:reach_stationary} 
		\exists \pi^\ast\in\Stationary\text{ s.t. }
		\sup_{\pi\in\Pi_B} \mathbf P_s^\pi(\reach) = \mathbf P_s^{\pi^\ast}(\reach) , \quad \text{ for all } s\in \set{1,\ldots,n}.
	\end{equation*}
\end{problem}

In the following, we shall assume that the CTMDPs and all bounds in the above decision
problems are given using rational numbers. That is, rates of transitions in each generator
matrix is a rational number, and the time bound $B$ is a rational number. 

\begin{theorem}[\cite{Buchholz2011,Miller_68}]
\label{thm:optimal_policy}
A policy $\pi\in\Pi_B$ is optimal if $\dv_t$, the decision vector taken by $\pi$ at time $B-t$, maximizes for almost all $t\in[0,B]$
\begin{equation}
\label{eq:optimal_policy}
\max_{\dv_t}(\Qb^{\dv_t} W_t^\pi) \text{ with } \frac{d}{dt} W_t^\pi = \Qb^{\dv_t}W_t^\pi,
\end{equation}
with the initial condition $W_0^\pi(\good) =1$ and $W_0^\pi(s) =0$ for all $s\in\{1,2,\ldots,n\}$.
There exists a piecewise constant policy $\pi$ that maximizes the equations.
\end{theorem}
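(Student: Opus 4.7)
The plan is to establish the Hamilton--Jacobi--Bellman (HJB) equation for the optimal value function and then run a verification argument. Define $V(s,t) := \sup_{\pi}\mathbf P_s^\pi(\reach)$ as the optimal reachability value on horizon $[0,t]$, with $V(\good,t)=1$ and $V(s,0)=0$ for $s\neq\good$. Conditioning on whether the first transition occurs in a small window $[0,\Delta]$ under an initial action $\dv(s)$ and applying Bellman's principle of optimality gives, after sending $\Delta\downarrow 0$, the HJB equation
\[
\frac{d}{dt}\,V(\cdot,t)\;=\;\max_{\dv\in\mathcal D}\,\Qb^{\dv}\,V(\cdot,t),
\]
where the maximum is achieved componentwise because the $s$-th row of $\Qb^{\dv}v$ depends on $\dv$ only through $\dv(s)\in\mathcal D_s$, and each $\mathcal D_s$ is finite.

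For the verification direction, suppose $\pi$ is such that its decision vector $\dv_t$ at forward time $B-t$ achieves the componentwise max of $\Qb^{\dv}W^\pi_t$. By Kolmogorov's backward equation applied on the remaining horizon under the schedule dictated by $\pi$, the vector $W^\pi_t$ satisfies $\dot W^\pi_t = \Qb^{\dv_t} W^\pi_t = \max_{\dv}\Qb^{\dv}W^\pi_t$ with the same initial data as $V$. Since the right-hand side of the HJB equation is the max of finitely many linear (hence Lipschitz) maps in $V$, it is globally Lipschitz, so any two classical solutions with the same initial condition coincide by a Gr\"{o}nwall-type argument. This yields $W^\pi_t = V(\cdot,t)$ for all $t\in[0,B]$, and in particular $\mathbf P_s^\pi(\reach)=V(s,B)$, proving $\pi$ optimal.

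For the existence of a piecewise constant optimizer I would construct $V$ inductively: start from $V(\cdot,0)$, pick some $\dv^{(0)}$ maximizing $\Qb^{\dv}V(\cdot,0)$ componentwise, solve the linear ODE $\dot V=\Qb^{\dv^{(0)}}V$ forward in $t$, and switch as soon as a different $\dv$ strictly improves the max. On each phase $V$ is a sum of real-analytic exponential-polynomial modes, so the differences $[(\Qb^{\dv}-\Qb^{\dv'})V(\cdot,t)]_s$ are real analytic; each is either identically zero (in which case the tie is immaterial and a consistent choice is made) or has only isolated zeros, so the candidate switch times are isolated, and compactness of $[0,B]$ bounds their number. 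The main obstacle is exactly this finite-switches claim: one must rule out an argmax oscillating on a pathological set and resolve ties between decision vectors on open intervals without spawning additional switches. The careful analytic-ODE bookkeeping underlying the cited works \cite{Miller_68,Neuhuer2010,Rabe:2011,Buchholz2011} is precisely what carries out this step.
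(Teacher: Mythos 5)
The paper does not prove this theorem: it is imported from \cite{Buchholz2011,Miller_68}, and the only ingredient the paper actually extracts from those proofs is the lexicographic characterization via the decreasing sets $\mathcal F_1(W)\supseteq\cdots\supseteq\mathcal F_{n+2}(W)$ of \eqref{eq:lexi_set}, which is what drives everything downstream (Proposition~\ref{prop:multi_dec}, Lemma~\ref{lem:switch-points-non-tangential}). Your verification half --- derive the HJB system, note that its right-hand side is a pointwise maximum of finitely many linear maps and hence globally Lipschitz, and conclude $W^\pi_t=V(\cdot,t)$ by uniqueness --- is the standard route and is acceptable as a sketch, modulo the unproved (but true) assertion that the value function is absolutely continuous in $t$ so that the dynamic-programming limit really yields a Carath\'eodory solution of the ODE rather than merely a one-sided inequality.

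The genuine gap is in the existence of a piecewise constant maximizer, and it is not just deferred bookkeeping. Two things go wrong with ``pick a first-order argmax, integrate until another $\dv$ strictly improves, repeat.'' First, a first-order tie on an interval is \emph{not} immaterial: if $\dv$ and $\dv'$ both maximize $\Qb^{\cdot}W_{t^*}$ but $[\Qb^{\dv}]^2W_{t^*}<[\Qb^{\dv'}]^2W_{t^*}$ in some component, then choosing $\dv$ at $t^*$ places you strictly outside the argmax for all $t\in(t^*,t^*+\delta)$; a badly chosen first-order maximizer may fail to remain in the argmax on \emph{any} nondegenerate interval, so your ``switch as soon as'' rule has no well-defined next phase. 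This is precisely what the refinement to $\mathcal F_{n+2}$ repairs: membership of $\dv$ in $\mathcal F_{n+2}(W_{t^*})$ forces agreement of all the relevant higher derivatives of $(\Qb^{\dv}-\Qb^{\dv'})W_t$ at $t^*$ and thereby guarantees persistence of optimality on $[t^*,t^*+\delta)$. Second, ``isolated zeros of an analytic function plus compactness of $[0,B]$'' does not bound the number of switches, because each switch changes the generator and hence the analytic function whose zeros you are counting; nothing in your argument excludes switch times accumulating at some $t^\dagger<B$ (Zeno behaviour). Ruling that out is the actual content of \cite{Miller_68,Rabe:2011} and is the one step that cannot be waved through.
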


The maximization in Equation~\eqref{eq:optimal_policy} above is performed element-wise. 
Equation~\eqref{eq:optimal_policy} should be solved forward in time to construct the policy $\pi$ backward 
in time due to the definition $\dv_t = \pi_{B-t}$. 
One can alternatively write down \eqref{eq:optimal_policy} directly backward in time based on $\pi_t$.

The proof of Theorem~\ref{thm:optimal_policy} is constructive \cite{Buchholz2011,Miller_68} and is based on the following sets for any vector $W$:
\begin{align}
& \mathcal F_1(W) = \set{\dv\in\mathcal D\,|\, \dv \text{ maximizes } \Qb^{\dv} W}, \nonumber\\
&  \mathcal F_2(W) = \set{\dv\in  \mathcal F_1(W)  \,|\, \dv \text{ maximizes } [\Qb^{\dv}]^2 W},\label{eq:lexi_set}\\
& \cdots\nonumber\\
&  \mathcal F_j(W) = \set{\dv\in  \mathcal F_{j-1}(W)  \,|\, \dv \text{ maximizes } [\Qb^{\dv}]^j W}.\nonumber
\end{align}
The sets $\mathcal F_j(W)$ form a sequence of decreasing sets such that 
$\mathcal F_1(W)\supseteq \mathcal F_2(W)\supseteq\ldots \supseteq \mathcal F_{n+2}(W) = \mathcal F_{n+k}(W)$ for all $k>2$. 
An optimal piecewise constant policy is the one that satisfies the condition $\dv_t\in \mathcal F_{n+2}(W_t^\pi)$ for all $t\in[0,B]$. 
Note that if $\mathcal F_j(W_t^\pi)$ has only one element for some $j$, $\mathcal F_{k}(W_t^\pi) = \mathcal F_j(W_t^\pi)$ 
for all $k\ge j$ and that element is the optimal decision vector. 
The next proposition shows that when $\mathcal F_{n+2}(W_t^\pi)$ has more than one element, we can pick any one (and in fact, switch between them arbitrarily).

\begin{proposition}\label{prop:multi_dec}
	Let $\pi$ be an optimal policy satisfying Equation~\eqref{eq:optimal_policy}.
	Take any $t^\ast$ such that $\mathcal F_{n+2}(W_{t^\ast}^\pi)\neq \lim_{t \uparrow t^\ast}F_{n+2}(W_{t}^\pi)$. 
	If $\mathcal F_{n+2}(W_{t^\ast}^\pi)=\{\dv^1,\dv^2,\dots,\dv^p\}$ for some $p>1$ and 
	\begin{equation*}
	\Delta_i := \sup\set{\delta>0\,|\, \dv^i\in \mathcal F_{n+2}(W_{t}^{\pi})\text{ for all } t\in[t^\ast,t^\ast+\delta)},\quad \forall i\in\{1,2,\ldots,p\}
	\end{equation*}
	Then, $\Delta_1=\Delta_2=\dots=\Delta_p$.\\
%
%
	Suppose there are points $\delta_1, \delta_2$ such that $t^\ast\leq \delta_1 < \delta_2 <  t^\ast+\Delta_1$ and
	for all $t \in [\delta_1,\delta_2)$, we have $\pi_{B-t} = \dv$ for some $\dv\in\mathcal F_{n+1}(W_{t^\ast}^\pi)$. 
	If $\pi'$ is a policy that agrees with $\pi$ on $[0, \delta_1)$ but 
	for all $t \in [\delta_1, \delta_2)$, we have $\pi'_{B - t} = \dv'$ for some $\dv' \in \mathcal F_{n+1}(W_{t^\ast}^\pi)\setminus \set{\dv}$,
	then $\pi'$ also satisfies Equation~\eqref{eq:optimal_policy} for almost all $t\in [0, \delta_2)$.
\end{proposition}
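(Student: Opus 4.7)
The plan is to prove both parts through a single Krylov-subspace argument based on the Cayley--Hamilton theorem. I would first establish the following lemma: if two decision vectors $\dv, \dv' \in \mathcal F_{n+1}(W)$ for a common vector $W \in \reals^{n+1}$, then the matrices $A := \Qb^{\dv}$ and $A' := \Qb^{\dv'}$ act identically on the Krylov subspace
\[
K := \mathrm{span}\{W, A W, A^2 W, \ldots\},
\]
the subspace $K$ is invariant under both $A$ and $A'$, and $e^{As}v = e^{A's}v$ for every $v \in K$ and $s \in \reals$. Indeed, $\dv, \dv' \in \mathcal F_{n+1}(W)$ forces $A^k W = (A')^k W$ for $k = 0, 1, \ldots, n+1$; writing $v_k := A^k W$ one obtains $A v_k = v_{k+1} = A' v_k$ for $k = 0, \ldots, n$. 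By Cayley--Hamilton (the state space has dimension $n+1$) all subsequent $v_k$ lie in $\mathrm{span}\{v_0, \ldots, v_n\} = K$, so $A$ and $A'$ coincide on a spanning set of $K$ and hence on all of $K$; equality of the matrix exponentials on $K$ then follows from the power series.

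For Part~1, applying the lemma to each pair $\dv^i, \dv^j \in \mathcal F_{n+2}(W_{t^*}^\pi) \subseteq \mathcal F_{n+1}(W_{t^*}^\pi)$ shows that all the matrices $\Qb^{\dv^i}$ share the Krylov subspace $K$ of $W_{t^*}^\pi$ and act identically on it. A short induction over the switch points of $\pi$ on $[t^*, t^* + \Delta_1)$ then yields the invariance $W_t^\pi \in K$ throughout this interval: any decision that $\pi$ uses must lie in $\mathcal F_{n+2}(W_t^\pi)$ by optimality, and re-applying the lemma at that switch point forces it to coincide with every $\Qb^{\dv^i}$ on the Krylov subspace of the current state, which is contained in $K$. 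Given $W_t^\pi \in K$, the vectors $(\Qb^{\dv^i})^k W_t^\pi$ agree for all $i \in \{1, \ldots, p\}$ and every $k$, so membership $\dv^i \in \mathcal F_{n+2}(W_t^\pi)$ is controlled by a single condition---elementwise maximality of this common value over $\mathcal D$---that is either satisfied by all of the $\dv^i$ simultaneously or by none. Therefore the exit times $\Delta_i$ coincide.

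For Part~2, applying the lemma to $\dv, \dv' \in \mathcal F_{n+1}(W_{t^*}^\pi)$ yields $\Qb^{\dv} \equiv \Qb^{\dv'}$ on $K$. Since $\pi' = \pi$ on $[0, \delta_1)$ and $\delta_1 < t^* + \Delta_1$, the invariance established in Part~1 gives $W_{\delta_1}^{\pi'} = W_{\delta_1}^\pi \in K$. On $[\delta_1, \delta_2)$ the two trajectories
\[
W_t^\pi = e^{\Qb^{\dv}(t - \delta_1)} W_{\delta_1}^\pi, \qquad W_t^{\pi'} = e^{\Qb^{\dv'}(t - \delta_1)} W_{\delta_1}^\pi
\]
both remain in $K$ and coincide by the lemma. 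Optimality of $\pi$ supplies $\dv \in \mathcal F_1(W_t^\pi)$ almost everywhere on $[\delta_1, \delta_2)$, and because $W_t^\pi \in K$ one has $\Qb^{\dv} W_t^\pi = \Qb^{\dv'} W_t^\pi$, so $\dv' \in \mathcal F_1(W_t^{\pi'})$, which is exactly what Eq.~\eqref{eq:optimal_policy} requires of $\pi'$ on $[0, \delta_2)$.

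The main obstacle is the invariance claim invoked in both parts: that the trajectory $W_t^\pi$ stays inside the initial Krylov subspace $K$ throughout $[t^*, t^* + \Delta_1)$ even when $\pi$ may switch between several decisions drawn from $\mathcal F_{n+2}(W_t^\pi)$. Making this rigorous requires propagating the lemma through every intermediate switch point of $\pi$ on $(t^*, t^* + \Delta_1)$, using that any decision entering $\mathcal F_{n+2}(W_t^\pi)$ at such a time must again coincide with the surviving $\dv^i$'s on the Krylov subspace of the current $W_t^\pi$, a subspace that remains contained in $K$. Once this invariance is secured, both equalities $\Delta_i = \Delta_j$ in Part~1 and $W_t^\pi = W_t^{\pi'}$ in Part~2 follow from a single application of the core lemma.
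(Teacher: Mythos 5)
Your proposal is correct and follows essentially the same route as the paper: the heart of both arguments is that any two decision vectors surviving to $\mathcal F_{n+1}$ (resp.\ $\mathcal F_{n+2}$) at $W_{t^\ast}^\pi$ satisfy $[\Qb^{\dv}]^l W_{t^\ast}^\pi = [\Qb^{\dv'}]^l W_{t^\ast}^\pi$ for all $l\ge 0$, hence their matrix exponentials agree along the trajectory, forcing simultaneous exit from $\mathcal F_{n+2}$ in Part~1 and identical trajectories in Part~2. The only difference is presentational: you derive this identity explicitly via Cayley--Hamilton and the Krylov subspace, whereas the paper reads it off from the already-stated stabilization $\mathcal F_{n+2}(W)=\mathcal F_{n+k}(W)$ for $k>2$ (which rests on the same dimension count).
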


\begin{proof}
	Since $\mathcal F_{n+2}(W_{t^\ast}^\pi)=\mathcal F_{n+k}(W_{t^\ast}^\pi)$ for all $k>2$, for any $\dv^i$ and $\dv^j$ belonging to the 
	set $\mathcal F_{n+2}(W_{t^\ast}^\pi)$, we have $[\Qb^{\dv^i}]^l W_{t^\ast}^\pi=[\Qb^{\dv^j}]^l W_{t^\ast}^\pi$ for all $l\geq0$.
	Pick $\delta>0$ sufficiently small such that $\{\dv^1,\dv^2,\dots,\dv^p\}\subseteq \mathcal F_{n+2}(W_{t}^\pi)$ for all $t\in[t^\ast,t^\ast+\delta)$.
	 If the policy $\pi$ selects $\dv^i$ for all $t\in[t^\ast,t^\ast+\delta)$, we can write
	\begin{equation*}
		W_{t}^\pi=e^{[\Qb^{\dv^i}](t-t^\ast)}W_{t^\ast}^\pi \text{ for } t\in[t^\ast,t^\ast+\delta),
	\end{equation*}
	where $e^{\Gamma}:=\sum_{k=0}^\infty \frac{1}{k!}\Gamma^k$ denotes the exponential of a matrix $\Gamma$. Therefore, using the fact that $[\Qb^{\dv^i}]^l W_{t^\ast}^\pi=[\Qb^{\dv^j}]^l W_{t^\ast}^\pi$ for all $l\geq0$ we have
	 \begin{equation}
	 \label{eq:multi_dec_equal_exponents}
	 	e^{[\Qb^{\dv^i}](t-t^\ast)}W_{t^\ast}^\pi=e^{[\Qb^{\dv^j}](t-t^\ast)}W_{t^\ast}^\pi, \quad \forall\, t\ge t^\ast.
	 \end{equation}
	 Similarly, we have
	\begin{equation}
		\label{eq:multi_dec_equal_derivitives}
		[\Qb^{\dv^i}]^l e^{[\Qb^{\dv^i}]\Delta}W_{t^\ast}^\pi=[\Qb^{\dv^j}]^l e^{[\Qb^{\dv^j}]\Delta}W_{t^\ast}^\pi,\quad \forall\, l\ge 0 \text{ and }\Delta\ge 0.
	\end{equation}
	Now take any $i = \arg\min_j\Delta_j$, thus $\Delta_i\le \Delta_j$ for all $j$. Also take $\dv'\in \mathcal F_{n+2}(W_{t^\ast+\Delta_i}^\pi)$ and $\dv'\neq \dv^i$
	(this is possible due to the definition of $\Delta_i$).
	Denote by $h$ the smallest integer for which $1\leq h \leq n+2$ and
	 \begin{equation*}
	 [\Qb^{\dv'}]^h W_{t^\ast+\Delta_i}^\pi>[\Qb^{\dv^i}]^h W_{t^\ast+\Delta_i}^\pi\Rightarrow [\Qb^{\dv'}]^h e^{[\Qb^{\dv^i}]\Delta_i}W_{t^\ast}^\pi>[\Qb^{\dv^i}]^h e^{[\Qb^{\dv^i}]\Delta_i}W_{t^\ast}^\pi.
	 \end{equation*}
	 Combining the above expression with Equation~\eqref{eq:multi_dec_equal_derivitives}, we get 
	 \begin{equation*}
	 [\Qb^{\dv'}]^h e^{[\Qb^{\dv^i}]\Delta_i}W_{t^\ast}^\pi>[\Qb^{\dv^j}]^h e^{[\Qb^{\dv^j}]\Delta_i}W_{t^\ast}^\pi\,\,\Rightarrow\,\,  [\Qb^{\dv'}]^h W_{t^\ast+\Delta_i}^\pi>[\Qb^{\dv^j}]^h W_{t^\ast+\Delta_i}^\pi,
	 \end{equation*}
	 which implies that $\Delta_j\le \Delta_i$ for any $j$. The particular selection of $i$ results in $\Delta_j = \Delta_i$ for all $i,j$.
	 The second part of the proposition is obtained by setting $\Delta = (\delta_2 - \delta_1)$ in Equation~\eqref{eq:multi_dec_equal_derivitives} and using the definition of the exponential of a matrix.
\end{proof}

The above proposition highlights the fact that whenever $\mathcal F_{n+2}(W_{t}^\pi)$ contains more than one decision vector over a time interval, 
one can construct infinitely many optimal policies by arbitrarily switching between such decision vectors. 
In the rest of this paper, we restrict our attention to optimal policies that take only mandatory switches: the 
optimal policy will take an element of $\mathcal F_{n+2}(W_{t}^\pi)$ as long as possible. 
This does not influence Problems~\ref{prob:prob_main} and \ref{prob:prob_switch}.

The major challenge in the computation of the optimal policy, thus answering the reachability problem, is the computation of the largest 
time $t\in[0,B)$ such that $ \mathcal F_{n+2}(W_t^\pi)\ne \mathcal F_{n+2}(W_{t^-}^\pi)$, where $W_{t^-}^\pi$ 
denotes the value of $W_{t-\delta}^\pi$ with $\delta$ converging to zero from the right. 
Suppose a decision vector $\dv_0\in \mathcal F_{n+2}(W_0^\pi)$ is selected. The optimal policy will change at the following time point:
\begin{equation*}
	t'' := \sup\set{t\,|\, \dv_0\in \mathcal F_{n+2}(W_{t'}^\pi)\text{ for all } t'\in[0,t)}.
\end{equation*}

\section{Conditional Decidability of Problems~\ref{prob:prob_main} and~\ref{prob:prob_switch}}\label{sec:decidability}

\subsection{Schanuel's Conjecture and its Implications}


Our decidability results will assume Schanuel's Conjecture for the complex numbers,
a unifying conjecture in transcendental number theory
(see, e.g., \cite{Lang1966}).
Recall that a \emph{transcendence basis} of a field extension $L/K$ is a subset 
$S \subseteq L$ such that $S$ is algebraically independent over $K$ and $L$ is algebraic over $K(S)$. 
The \emph{transcendence degree} of $L/K$ is the (unique) cardinality of some basis.

\begin{conjecture}[\textbf{Schanuel's Conjecture (SC)}]
Let $a_1,\ldots, a_n$ be complex numbers that are linearly independent over rational numbers $\rats$. 
Then the field $\rats(a_1,\ldots,a_n,e^{a_1},\ldots,e^{a_n})$ has transcendence degree at least $n$ over $\rats$.
\end{conjecture}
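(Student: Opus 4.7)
The plan would be to attempt an induction on $n$ and invoke the deepest tools available in transcendence theory. The base case $n=1$ asserts that for any nonzero complex $a_1$, at least one of $a_1$ and $e^{a_1}$ is transcendental over $\rats$; this is the classical Hermite--Lindemann theorem, and the only toehold we have that is unconditionally known. For $n \geq 2$, I would first dispose of the favourable special cases: if all $a_i$ are algebraic, the Lindemann--Weierstrass theorem gives the algebraic independence of $e^{a_1},\dots,e^{a_n}$ over $\rats$ and hence the claim; if all the $a_i$ are logarithms of algebraic numbers, Baker's theorem on linear forms in logarithms settles the analogous statement on the logarithmic side.

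For the general case, the natural line of attack is the auxiliary polynomial method. I would suppose toward contradiction that the transcendence degree of $\rats(a_1,\dots,a_n,e^{a_1},\dots,e^{a_n})$ is at most $n-1$, and construct via Siegel's lemma a nonzero integer polynomial $P(X_1,\dots,X_n,Y_1,\dots,Y_n)$ of controlled height and degree so that the entire function $F(z) = P(a_1 z,\dots,a_n z,\, e^{a_1 z},\dots,e^{a_n z})$ vanishes to high order at many integer points. Matching analytic upper bounds on $|F^{(k)}(z_0)|$ against algebraic lower bounds coming from the assumed dependency would, in principle, supply the contradiction, in the spirit of Gel'fond's classical transcendence proofs refined by the modern zero estimates of Philippon and Masser--W\"ustholz, together with Nesterenko's criterion for algebraic independence.

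The hard part, frankly, is that this \emph{is} Schanuel's conjecture, one of the central open problems of transcendental number theory and essentially untouched in full generality since its formulation around 1960. It implies at a stroke the algebraic independence of $e$ and $\pi$, the transcendence of $e+\pi$, $e^e$, and $\pi^e$, and much more, none of which anyone currently knows how to prove in isolation. The obstruction is structural: the auxiliary polynomial method demands a priori quantitative arithmetic control over the $a_i$ (height bounds, effective linear forms in logarithms) that is simply unavailable when the $a_i$ are permitted to be arbitrary complex numbers linearly independent over $\rats$. Every known transcendence technique breaks down at precisely the step where one needs a nontrivial lower bound on nonzero values of $F$, and I would get stuck at exactly this point. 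The sensible posture, and the one the paper adopts, is to assume Schanuel's conjecture as a hypothesis rather than attempt to prove it.
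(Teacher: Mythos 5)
You are right: this statement is Schanuel's Conjecture itself, a famous open problem, and the paper offers no proof of it --- it is stated only as a hypothesis under which the decidability results hold. Your assessment that every known transcendence technique (Hermite--Lindemann, Lindemann--Weierstrass, Baker, the auxiliary-polynomial method) covers only special cases and that the full statement must simply be assumed is exactly the posture the paper takes, so there is nothing to compare beyond noting that you correctly recognized the statement as unprovable by current methods.
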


An important consequence of Schanuel's conjecture is that the theory of reals $(\reals, 0, 1, +, \cdot, \leq)$
remains decidable when extended with the exponential and trigonometric functions over bounded domains.\footnote{
	We note that while the result is claimed in several papers \cite{MACINTYRE2008,Macintyre2016}, 
	a complete proof of this result has never been published.
	Thus, it would be nice to have a ``direct'' proof of our main theorem (Theorem~\ref{th:main-theorem}) starting with Schanuel's conjecture. 
	We do not know such a proof.
}

\begin{theorem}[Macintyre and Wilkie (see \cite{MACINTYRE2008,Macintyre2016})]
\label{th:macintyre}
Assume $\textbf{SC}$.
For any $n\in \nats$, the theory $\theory :=(\reals, \exp\upharpoonright [0,n], \sin\upharpoonright [0,n], \cos\upharpoonright [0,n])$ is decidable.
\end{theorem}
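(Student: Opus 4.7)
The plan is to follow the Macintyre--Wilkie strategy \cite{MACINTYRE2008,Macintyre2016}: first reduce the full first-order theory of $\theory$ to its existential fragment via a model-completeness result, then decide the existential fragment by exploiting the transcendence content of Schanuel's Conjecture. I would begin by invoking Wilkie's theorem \cite{Wilkie1997} that the real exponential field $(\reals,+,\cdot,\le,\exp)$ is model complete and o-minimal, and observe that $\sin\upharpoonright[0,n]$ and $\cos\upharpoonright[0,n]$ are restricted analytic functions. Hence they fit into the model-complete, o-minimal expansion $\reals_{\mathrm{an},\exp}$ of van den Dries--Macintyre--Marker. Model completeness then reduces every $\theory$-sentence to an existential sentence over the same language.

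For the existential fragment, I would run a two-sided semi-decision procedure. On the positive side, one searches for approximate numerical solutions on a rational grid of increasing precision. On the negative side, one enumerates, in parallel, SC-based certificates of unsatisfiability. Concretely, SC says that $n$ $\rats$-linearly-independent numbers $a_1,\dots,a_n$ together with $e^{a_1},\dots,e^{a_n}$ generate a field of transcendence degree at least $n$ over $\rats$; this rules out unexpected algebraic relations among exponentials and, combined with Khovanskii-style fewnomial bounds, yields effective zero estimates from which unsatisfiability certificates can be extracted. Since a satisfiable system admits numerical witnesses to arbitrary precision while an unsatisfiable one admits an SC-certified algebraic obstruction, at least one side of the search is guaranteed to terminate.

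The restricted trigonometric functions enter through the complex identity $e^{ix}=\cos x+i\sin x$, which translates bounded sine and cosine values into values of the complex exponential at purely imaginary arguments. Applying SC in its complex form, the same transcendence-degree argument yields effective certificates for mixed exponential--trigonometric systems, provided the trigonometric arguments are confined to the bounded interval $[0,n]$ on which $\sin,\cos$ belong to the signature. For each fixed $n$, only finitely many such restricted functions are involved, which keeps the translation uniform and keeps the quantifier-free matrix in a controlled class.

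The hard part, and the reason (as the footnote observes) that no complete published proof is available, is the effective extraction of certificates: SC is a purely qualitative transcendence statement, and turning it into a terminating algorithm requires explicit computable bounds on the algebraic complexity of unsatisfiability witnesses. Carrying this out rigorously demands a uniform treatment of weak zero estimates and Pfaffian-function tools across the combined signature $\exp,\sin\upharpoonright[0,n],\cos\upharpoonright[0,n]$, together with a matching model-completeness/quantifier-simplification result, so that the reduction to the existential fragment and the SC-driven termination argument compose without leaving gaps at the interface between $\exp$ on all of $\reals$ and the restricted analytic functions.
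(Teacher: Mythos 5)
This statement is not proved in the paper at all: it is imported verbatim from Macintyre and Wilkie \cite{MACINTYRE2008,Macintyre2016}, and the paper's own footnote explicitly records that ``a complete proof of this result has never been published.'' So there is no in-paper argument to compare yours against; the only honest assessment is of your sketch on its own terms, and by your own admission it is not a proof. The central step --- turning Schanuel's Conjecture, a purely qualitative transcendence statement, into effective, enumerable certificates of unsatisfiability for existential exponential--trigonometric systems --- is exactly the part that has never appeared in print, and your proposal gestures at it (``Khovanskii-style fewnomial bounds,'' ``effective zero estimates'') without supplying it. A reviewer cannot accept ``at least one side of the search is guaranteed to terminate'' when the termination proof is the open content of the theorem.

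There is also a more localized technical gap in your reduction to the existential fragment. You invoke model completeness of $\reals_{\mathrm{an},\exp}$ (van den Dries--Macintyre--Marker) and conclude that every sentence of $\theory$ is equivalent to an existential one ``over the same language.'' Model completeness of a richer structure does not automatically yield model completeness of a reduct in its own smaller signature: the existential formula produced in $\reals_{\mathrm{an},\exp}$ may use restricted analytic functions that are not in the language $(\reals, \exp\upharpoonright[0,n], \sin\upharpoonright[0,n], \cos\upharpoonright[0,n])$, and eliminating them back down to the reduct is a separate (and nontrivial) result --- indeed it is one of the points the cited Macintyre--Wilkie papers are about. Additionally, note that the theorem as stated in the paper restricts $\exp$ to $[0,n]$ as well, whereas your appeal to Wilkie's theorem concerns unrestricted $\exp$; the interface you flag at the end between global $\exp$ and the restricted functions is therefore not even the configuration needed here. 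In short: the outline matches the folklore strategy attributed to Macintyre and Wilkie, but it does not constitute a proof, and the paper itself deliberately treats the statement as a cited, conditional black box rather than something to be proved.
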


Our main result will show that Problems~\ref{prob:prob_main} and~\ref{prob:prob_switch} can be decided based on Theorem~\ref{th:macintyre}.
In fact, Problem~\ref{prob:prob_switch} can be decided directly from Schanuel's conjecture and recent results on exponential polynomials \cite{joel2016}.
\begin{theorem}[Main Result]
\label{th:main}
Assume $\textbf{SC}$.
Then Problems~\ref{prob:prob_main} and~\ref{prob:prob_switch} are decidable.
\end{theorem}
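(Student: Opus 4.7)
The plan is to reduce both decision problems to sentences of the Macintyre--Wilkie theory $\theory$ and then apply Theorem~\ref{th:macintyre}. The enabling observation is that by Theorem~\ref{thm:optimal_policy} every optimal policy is piecewise constant, and on a constant piece $[t_i,t_{i+1}]$ with decision vector $\dv_i$ the backward value vector evolves as $W_t^{\pi}=e^{\Qb^{\dv_i}(t-t_i)}W_{t_i}^{\pi}$. Entries of $e^{\Qb^{\dv}\tau}$ are exponential polynomials in $\tau$ whose characteristic exponents are the (complex algebraic) eigenvalues of the rational matrix $\Qb^{\dv}$. Writing each eigenvalue as $a+ib$ with $a,b$ real algebraic, every entry becomes a $\rats$-linear combination of expressions of the form $\tau^{j}e^{a\tau}\cos(b\tau)$ and $\tau^{j}e^{a\tau}\sin(b\tau)$. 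Because $\mathcal D$ is finite and $B$ rational, the arguments $a\tau$ and $b\tau$ lie in a uniformly bounded rational interval for $\tau\in[0,B]$; after rescaling they fall inside a domain on which $\theory$ makes $\exp,\sin,\cos$ available. Since real algebraic numbers are first-order definable in the real-closed-field fragment of $\theory$, the mapping $(t,t_1,\dots,t_k,\dv_0,\dots,\dv_k)\mapsto W_t^{\pi}(s)$ is expressible by a formula of $\theory$.

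For Problem~\ref{prob:prob_main} I would enumerate $k=0,1,2,\ldots$ and, for each $k$, form two sentences of $\theory$. The first, $\phi_k$, asserts that there exist $0<t_1<\cdots<t_k<B$ and a tuple $(\dv_0,\ldots,\dv_k)\in\mathcal D^{k+1}$ (a finite disjunction) such that for every $i$ and every $t\in(t_i,t_{i+1})$ the lexicographic optimality condition $\dv_i\in\mathcal F_{n+2}(W_t^{\pi})$ of Theorem~\ref{thm:optimal_policy} holds. Membership in $\mathcal F_{n+2}(W)$ expands to a finite Boolean combination of polynomial (in)equalities comparing the candidate against the finitely many elements of $\mathcal D$ across the $n+2$ lexicographic levels defined in~\eqref{eq:lexi_set}, and the universal quantifier over the open interval $(t_i,t_{i+1})$ is a first-order real quantifier. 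The second sentence, $\psi_k$, is $\phi_k$ conjoined with $\bigwedge_{s=1}^{n}W_B^{\pi}(s)>r(s)$. Since some optimal piecewise constant policy exists with a finite switch count $k^\ast$, the enumeration terminates at the least $k^\ast$ for which $\phi_{k^\ast}$ is true. All witnesses of $\phi_{k^\ast}$ are optimal policies (by Theorem~\ref{thm:optimal_policy}) and hence attain the common optimal value, so the truth of $\psi_{k^\ast}$ is exactly the answer to Problem~\ref{prob:prob_main}.

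For Problem~\ref{prob:prob_switch} the class $\Stationary$ is finite, so I iterate over each $\dv\in\mathcal D$ and decide the $\theory$-sentence $\forall t\in[0,B]\colon \dv\in\mathcal F_{n+2}(e^{\Qb^{\dv}t}W_0^{\pi})$, which by Theorem~\ref{thm:optimal_policy} states exactly that the stationary policy $\dv$ is optimal on $[0,B]$; the answer is YES iff at least one $\dv$ passes. As advertised in the excerpt, a more direct procedure avoids Theorem~\ref{th:macintyre}: each inequality in the $\mathcal F_{n+2}$ condition, viewed as a function of $t$, is a real exponential polynomial (a difference of two competing candidates), so the test reduces to deciding whether such polynomials have a non-tangential zero in $[0,B]$, which is decidable assuming $\textbf{SC}$ by~\cite{joel2016}.

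The hard part of the plan is the encoding underlying step~$\phi_k$: faithfully expressing $W_t^{\pi}(s)$ as a formula of $\theory$ after decomposing every $\Qb^{\dv}$ into its (algebraic) Jordan data, composing the resulting matrix exponentials across the $k$ pieces while keeping all arguments of $\exp,\sin,\cos$ within the bounded domain permitted by $\theory$, and correctly rendering the lexicographic $\mathcal F_{n+2}$ condition at each of the $n+2$ levels. Once this encoding is in place, correctness of the enumeration follows from the finite-switch property together with uniform optimality across initial states, and decidability is immediate from Theorem~\ref{th:macintyre}.
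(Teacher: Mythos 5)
Your proposal is correct in outline and follows the same high-level strategy as the paper (enumerate the number of switches $k$, encode the existence of a $k$-switch optimal policy as a sentence of $\theory$, decide it via Theorem~\ref{th:macintyre}, and finish with one more query comparing $W_B^\pi$ against $r$), but the core encoding is genuinely different. The paper does not quantify universally over $t$ inside each piece to assert $\dv^i\in\mathcal F_{n+2}(W_t^\pi)$ directly; instead it proves (Lemma~\ref{lem:switch-points-non-tangential}) that every switch point of the canonical optimal policy is a \emph{non-tangential zero} of an auxiliary output $y_t^{s,b}(\sigma_k)$ of a linear ODE, and then builds the sentence out of $\mathsf{Switch}$/$\mathsf{NoSwitch}$ predicates that assert the presence or absence of such zeros between consecutive switch times. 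Your direct encoding of the lexicographic condition \eqref{eq:lexi_set} as a finite Boolean combination of inequalities in the entries of a product of matrix exponentials, under one universal time quantifier per piece, is legitimate in a full first-order theory and is arguably simpler for the upper bound; what the paper's non-tangential-zero formulation buys is the rest of the paper: the ``direct'' decision procedure for Problem~\ref{prob:prob_switch} that avoids Macintyre--Wilkie and uses only Theorem~\ref{thm:skolem-joel} on zeros of exponential polynomials, and the matching lower bound of Section~\ref{sec:skolem_ctmdp}, where switch points and non-tangential zeros of the bounded continuous Skolem instance are placed in exact correspondence. Two small points to tighten: for Problem~\ref{prob:prob_switch} you use the condition ``$\dv\in\mathcal F_{n+2}(e^{\Qb^{\dv}t}W_0^\pi)$ for all $t$'' as if it were equivalent to optimality of the stationary policy, but Theorem~\ref{thm:optimal_policy} only gives sufficiency; necessity requires a short comparison argument showing that an optimal stationary $\dv$ forces $W_t^{\dv}$ to coincide with the optimal value function at all intermediate horizons (the paper is equally terse here, so this is not a gap relative to it). Also, in $\phi_k$ you should make explicit that consecutive decision vectors differ (or accept that $\phi_k$ true implies $\phi_{k'}$ true for $k'>k$ and stop at the least $k$); either convention works since any witness of $\phi_k$ is optimal by Theorem~\ref{thm:optimal_policy}.
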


In contrast, solving the time-bounded reachability problem for \emph{stationary} policies is decidable unconditionally.
This is because fixing a stationary policy reduces the time-bounded reachability problem to one on CTMCs, and one can
use the decision procedure from \cite{Aziz2000}. 

\subsection{Non-tangential Zeros}

Recall that the solution to a first-order linear ODE of dimension $n$:
\[
\frac{d}{dt}X_t = A X_t, \quad z_t = C X_t
\]
with real matrices $A$ and $C$ and 
real initial condition $X_0\in \reals^n$, can be written as 
$z_t = C e^{At} X_0$
where $e^{\Gamma}$ denotes the exponential of a square matrix $\Gamma$,
and defined as the infinite sum $e^{\Gamma}:=\sum_{k=0}^\infty \frac{1}{k!}\Gamma^k$ that is guaranteed to converge for any matrix $\Gamma$. 
The function can be expressed as an exponential polynomial
$z_t = \sum_{j=1}^k P_t(j)e^{\lambda_jt}$,
where $\lambda_1,\ldots, \lambda_k$ are the distinct (real or complex) eigenvalues of $A$. 
Each function $P_t(j)$ is a polynomial
function of $t$ possibly with complex coefficients and has a degree one less than the multiplicity of the eigenvalue $\lambda_j$.
Since the eigenvalues come in conjugate pairs, we can write the real-valued function $z$ as
\begin{equation}
\label{eq:closed_form}
z_t = \sum_{j=1}^k e^{a_jt} \sum_{l=0}^{m_j - 1} c_{j,l} t^l \cos(b_jt + \varphi_{j,l}),
\end{equation}
where the eigenvalues are $a_j \pm \mathbf{i} b_j$ with multiplicity $m_j$. 
If $A$, $X_0$, and $C$ are over the rational numbers, then $a_j$, $b_j$, $c_{j,l}$ are real algebraic 
and $\varphi_{j,l}$ is such that $e^{\mathbf{i}\varphi_{j,l}}$ is algebraic for all $j$ and $l$.
We can symbolically compute derivatives of $z$ which also become functions with a similar closed-form as in \eqref{eq:closed_form}.


\begin{definition}
\label{def:tangential}
The function $z_t$ has a \emph{zero} at $t = t^*$ if $z_{t^*} = 0$.
The zero is said to be \emph{non-tangential} if there is an $\varepsilon>0$ such that $z_{t_1}z_{t_2}<0$ for all $t_1\in(t^\ast-\varepsilon,t^\ast)$ and all $t_2\in(t^\ast,t^\ast+\varepsilon)$.
The zero is called \emph{tangential} if there is an $\varepsilon>0$ such that $z_{t_1}z_{t_2}>0$ for all $t_1\in(t^\ast-\varepsilon,t^\ast)$ and all $t_2\in(t^\ast,t^\ast+\varepsilon)$.
\end{definition}
Note that there are functions with zeros that are neither tangential nor non-tangential. Consider the function $z_t = t\sin\left(\frac{1}{t}\right)$ for $t\ne 0$ and $z_0= 0$. The function does not satisfy the conditions of being tangential or non-tangential. For any $\varepsilon>0$, there are $t_1\in(-\varepsilon,0)$ and $t_2\in(0,\varepsilon)$, such that $z_{t_1}z_{t_2} = t_1t_2\sin\left(\frac{1}{t_1}\right)\sin\left(\frac{1}{t_2}\right)$ is positive. There are also $t_1$ and $t_2$ in the respective intervals that make $z_{t_1}z_{t_2}$ negative. In this paper, we only work with functions of the form \eqref{eq:closed_form} that are analytic thus infinitely differentiable. Therefore, the first non-zero derivative of $z_t$ at $t^\ast$ will decide if $t^\ast$ is tangential or not.
\begin{proposition}
\label{thm:tangential_derivative}
For any function $z_t$ of the form \eqref{eq:closed_form} such that $z_{t^\ast} = 0$ and $z\not\equiv 0$, 
there is a $k_0$ such that $\frac{d^k}{dt^k}z_t\big|_{t=t^\ast} = 0$ for all $k< k_0$ and $\frac{d^{k_0}}{dt^{k_0}}z_t\big|_{t=t^\ast} \neq 0$. 
Moreover, $t^\ast$ is tangential if $k_0$ is an even number and is non-tangential if $k_0$ is an odd number. 
\end{proposition}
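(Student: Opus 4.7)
The plan is to exploit real analyticity of $z_t$ to reduce everything to an elementary Taylor-expansion argument. First, I would observe that $z_t$ as given by \eqref{eq:closed_form} is a finite sum of products of $e^{a_jt}$, polynomials in $t$, and $\cos(b_jt + \varphi_{j,l})$, each of which is real analytic on all of $\reals$; since sums and products of real analytic functions are real analytic, $z$ is real analytic on $\reals$.

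Next, I would argue the existence of $k_0$. Suppose for contradiction that $\frac{d^k}{dt^k}z_t\big|_{t=t^\ast} = 0$ for every $k \geq 0$. Then the Taylor series of $z$ around $t^\ast$ is identically zero. Because $z$ is real analytic at $t^\ast$, it agrees with its Taylor series on some open interval containing $t^\ast$, so $z$ vanishes on an open set. The identity theorem for real analytic functions on the connected domain $\reals$ then forces $z \equiv 0$, contradicting the hypothesis. Hence the set $\{k \geq 0 : z^{(k)}(t^\ast) \neq 0\}$ is nonempty, and we take $k_0$ to be its minimum; by hypothesis $z_{t^\ast}=0$ gives $k_0 \geq 1$.

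Having $k_0$ in hand, I would write the Taylor expansion
\begin{equation*}
z_t = \frac{z^{(k_0)}(t^\ast)}{k_0!}(t-t^\ast)^{k_0} + r(t),
\end{equation*}
where $r(t) = o\bigl((t-t^\ast)^{k_0}\bigr)$ as $t \to t^\ast$. Setting $c_0 := z^{(k_0)}(t^\ast)/k_0! \neq 0$, I would pick $\varepsilon > 0$ small enough that $|r(t)| < \tfrac{|c_0|}{2}|t-t^\ast|^{k_0}$ for all $t$ with $0 < |t-t^\ast| < \varepsilon$; this forces $z_t$ to have the same sign as $c_0(t-t^\ast)^{k_0}$ on the punctured interval.

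Finally, for the parity case split: if $k_0$ is even, $(t-t^\ast)^{k_0} > 0$ on both sides of $t^\ast$, so $z_{t_1}$ and $z_{t_2}$ both carry the sign of $c_0$, giving $z_{t_1}z_{t_2} > 0$ (tangential). If $k_0$ is odd, $(t-t^\ast)^{k_0}$ changes sign at $t^\ast$, so $z_{t_1}$ and $z_{t_2}$ have opposite signs, giving $z_{t_1}z_{t_2} < 0$ (non-tangential). This matches Definition~\ref{def:tangential}. There is no real obstacle here: the only subtlety worth spelling out is the appeal to the identity theorem, ensuring that ``not identically zero'' really does preclude an infinite-order zero — everything else is a one-line Taylor estimate.
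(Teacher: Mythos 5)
Your proof is correct and follows essentially the same route as the paper's: identify $k_0$ as the order of the first non-vanishing derivative at $t^\ast$ (existence guaranteed by analyticity and $z\not\equiv 0$), then read off the local sign behavior of $z_t$ from the leading Taylor term $(t-t^\ast)^{k_0}$ and conclude by parity. The only cosmetic difference is that you control the Taylor remainder by an explicit $o\bigl((t-t^\ast)^{k_0}\bigr)$ estimate, whereas the paper factors out $(t-t^\ast)^{k_0}$ and uses continuity of the resulting quotient $g_t$ at $t^\ast$; both are standard and equivalent.
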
 
\begin{proof}
The proof is based on the Taylor series of $z_t$ at $t=t^\ast$. Take $k_0$ the order of the first non-zero derivative of $z_t$ at $t=t^\ast$. This $k_0$ always exists since otherwise $z\equiv 0$. The Taylor series of $z_t$ will be
\begin{equation}
\label{eq:Taylor}
z_t = \sum_{k=k_0}^\infty \frac{(t-t^\ast)^k}{k!}\frac{d^k}{dt^k}z_t\big|_{t=t^\ast} = (t-t^\ast)^{k_0} \frac{d^{k_0}}{dt^{k_0}}z_t\big|_{t=t^\ast}\sum_{k=0}^\infty\alpha_k(t-t^\ast)^k,
\end{equation}
for some $\{\alpha_0,\alpha_1,\ldots\}$ with $\alpha_0 = \frac{1}{k_0!}$.
%
Define the function $g$ by
$g_t:= \frac{z_t}{(t-t^\ast)^{k_0}}$ for $t\ne t^\ast$ and 
$g_{t^\ast}:=\frac{1}{k_0!}\frac{d^{k_0}}{dt^{k_0}}z_t\big|_{t=t^\ast}$.
Using \eqref{eq:Taylor}, we get that $g$ is continuous at $t^\ast$ with 
$g_{t^\ast}\neq 0.$ Therefore, there is an interval $(t^\ast-\varepsilon, t^\ast+\varepsilon)$
over which the function has the same sign as $g_{t^\ast}$. For all $t_1\in(t^\ast-\varepsilon,t^\ast)$ and $t_2\in(t^\ast,t^\ast+\varepsilon)$
\begin{align*}
&g_{t_1}g_{t^*}>0\Rightarrow \frac{z_{t_1}}{(t_1-t^\ast)^{k_0}}g_{t^*}>0 \Rightarrow (-1)^{k_0}z_{t_1}g_{t^*}>0\\
&g_{t_2}g_{t^*}>0\Rightarrow \frac{z_{t_2}}{(t_2-t^\ast)^{k_0}}g_{t^*}>0 \Rightarrow z_{t_2}g_{t^*}>0\\
&  \Rightarrow (-1)^{k_0}z_{t_1}g_{t^*} z_{t_2}g_{t^*}>0  \Rightarrow (-1)^{k_0}z_{t_1}z_{t_2}>0.
\end{align*}
This means $z_{t_1}z_{t_2}>0$ for even $k_0$ and $t^\ast$ becomes tangential, and $z_{t_1}z_{t_2}<0$ for odd $k_0$ and $t^\ast$ becomes non-tangential.
%
\end{proof}
For any function $z_t = Ce^{At}X_0$, the predicate $\mathsf{NonTangentialZero}(z,l, u)$ stating the
existence of a non-tangential zero in an interval $(l, u)$ is expressible in $\theory$:
\begin{align*}
\exists t^*\,.\,l < t^* < u \wedge z_{t^*} = 0 \wedge 
\left[\exists \varepsilon > 0\,.\, \forall t_1\in(t^*-\varepsilon,0), t_2\in(0,t^* + \varepsilon)\,.\, z_{t_1}z_{t_2} < 0\right]
\end{align*}

\subsection{Switch Points are Non-Tangential Zeroes}\label{sec:ctmdp_nontangential}

Given a CTMDP $\M$ and a piecewise constant optimal policy $\pi: [0,B]\rightarrow \mathcal D$ for the time-bounded 
reachability problem, a \emph{switch point} $t^\ast$ is a point of discontinuity of $\pi$.
Consider a switch point $t^\ast$ such that the optimal policy takes the decision vector $\dv$ in the time interval $(t^\ast-\varepsilon)$ and then switches to another decision vector $\dv'$ at time $t^\ast$ for some $\varepsilon > 0$:
\begin{align*}
& \dv\in\mathcal F_{n+2}(W_t^\pi) \text{ and } \dv'\not\in\mathcal F_{n+2}(W_t^\pi)\quad \forall t\in(t^\ast-\varepsilon, t^\ast),\\
 & \dv\not\in\mathcal F_{n+2}(W_{t}^\pi) \text{ and } \dv'\in\mathcal F_{n+2}(W_{t}^\pi)\quad \forall t\in(t^\ast, t^\ast+\varepsilon).
\end{align*}
Consider a (not necessarily unique) state $s\in S$ 
with actions $a,b \in\mathcal D_s$ such that $a\neq b$ and $\dv(s)=a$, $\dv'(s)=b$. 
Define the following set of first-order ODEs
\begin{equation}
\label{eq:systems_list}
\Sigma:\left\{
\begin{array}{lr}
\frac{d}{dt} W_t^\pi=\Qb^{\dv}W_t^\pi\\
z_t=(q^a-q^b)W_t^\pi
\end{array}\right.
\end{equation}
for $t\in (t^\ast-\varepsilon,t^\ast+\varepsilon)$, where $q^a$ and $q^b$ denote the $s^{th}$ row of the matrices $\Qb^{\dv}$ and $\Qb^{\dv'}$, respectively.
The optimal decision vector on an interval before $t^\ast$ is $\dv$, thus for all $t\in (t^\ast-\varepsilon,t^\ast)$,
$$\dv\in\mathcal F_1(W_t^\pi)\Rightarrow \Qb^\dv W_t^\pi\ge  \Qb^{\dv'}W_t^\pi\Rightarrow  (\Qb^\dv -  \Qb^{\dv'})W_t^\pi\ge 0 \Rightarrow (q^a-q^b)W_t^\pi\ge 0\Rightarrow z_t\ge 0.$$
%
The next lemma states that the switch point $t^\ast$ corresponds to a non-tangential zero for $z_t$.

\begin{lemma}
\label{lem:switch-points-non-tangential}
Let $\pi$ be an optimal piecewise constant policy for the time-bounded reachability problem with bound $B$.
Suppose $\pi(B-t) = \dv_t$ for all $t\in [0,B]$. 
Suppose that for a time point $t^\ast$, 
$\dv\in\mathcal D$ is an optimal decision before $t^\ast$ and $\dv'\neq \dv$ is optimal right after $t^\ast$. 
There is an $\varepsilon$ such that for any $s\in S$ with $\dv(s)\neq \dv'(s)$, $z_t < 0$ for all $t\in (t^\ast, t^\ast+\varepsilon)$ 
with $z_t$ defined in \eqref{eq:systems_list}.
\end{lemma}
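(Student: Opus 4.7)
My plan is to show that $z_t$ is real-analytic on a two-sided neighbourhood of $t^\ast$ with $z_{t^\ast}=0$, to transfer the sign $\tilde z_t\le 0$ from the actual dynamics onto the extended $z_t$ just after $t^\ast$, and to rule out $z_t\equiv 0$ so that the zero at $t^\ast$ is isolated.

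Analyticity of $z_t$ on $(t^\ast-\varepsilon,t^\ast+\varepsilon)$ is immediate from $W_t^\pi=e^{\Qb^\dv(t-t^\ast)}W_{t^\ast}^\pi$, the solution of \eqref{eq:systems_list} extended past $t^\ast$. The preceding text already shows $z_t\ge 0$ on $(t^\ast-\varepsilon,t^\ast)$; applying the mirror-image computation under the actual post-switch trajectory $\tilde W_t:=e^{\Qb^{\dv'}(t-t^\ast)}W_{t^\ast}^\pi$, where $\dv'\in\mathcal F_1(\tilde W_t)$ for $t>t^\ast$ forces $(q^a-q^b)\tilde W_t\le 0$, and using continuity of $W$ at $t^\ast$, I get $z_{t^\ast}=\tilde z_{t^\ast}=0$.

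The main observation is that at the switch point both $\dv$ and $\dv'$ lie in $\mathcal F_{n+2}(W_{t^\ast}^\pi)$, obtained as limits from the left and the right of the closed ``maximises'' conditions defining the filtration. Because the chain $\mathcal F_1\supseteq\mathcal F_2\supseteq\cdots$ stabilises at $\mathcal F_{n+2}$, the identity $[\Qb^\dv]^k W_{t^\ast}^\pi=[\Qb^{\dv'}]^k W_{t^\ast}^\pi$ holds for every $k\ge 0$. Hence the Taylor expansions at $t^\ast$ of $W_t^\pi$ and of $\tilde W_t$ coincide term by term, so in fact $W_t^\pi=\tilde W_t$ on some interval $(t^\ast,t^\ast+\varepsilon')$ by analyticity. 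This transfers $\tilde z_t\le 0$ into $z_t\le 0$ on that interval.

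To finish, I would rule out $z_t\equiv 0$: otherwise $(q^a-q^b)W_t^\pi=0$ throughout, and replacing $b$ by $a$ at state $s$ in $\dv'$ yields a decision vector $\dv''$ with $\Qb^{\dv''}\tilde W_t=\Qb^{\dv'}\tilde W_t$ past $t^\ast$. A short induction on matrix powers then places $\dv''$ in $\mathcal F_{n+2}(\tilde W_t)$, contradicting the mandatory-switch convention, which retains $\dv$'s action at $s$ whenever some optimal continuation permits it. Hence $z_t\not\equiv 0$, its zeros are isolated, and combining $z_t\ge 0$ before $t^\ast$ with $z_t\le 0$ after yields the required $\varepsilon$ on which $z_t<0$. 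I expect the mandatory-switch step to be the main obstacle, since it requires carefully pinning down how the convention forbids gratuitous coordinate changes in the post-switch decision vector at a state where both actions are trajectory-equivalent.
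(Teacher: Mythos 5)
There is a genuine gap, and it sits at the step you call your ``main observation.'' You claim that at the switch point both $\dv$ and $\dv'$ lie in $\mathcal F_{n+2}(W_{t^\ast}^\pi)$, hence $[\Qb^{\dv}]^k W_{t^\ast}^\pi=[\Qb^{\dv'}]^k W_{t^\ast}^\pi$ for \emph{all} $k\ge 0$, hence $W_t^\pi=\tilde W_t$ just after $t^\ast$. This is false at a mandatory switch point, and in fact contradicts the hypothesis that a switch occurs: if the two decision vectors agreed on all powers applied to $W_{t^\ast}^\pi$, then by Equation~\eqref{eq:multi_dec_equal_exponents} and Proposition~\ref{prop:multi_dec} their trajectories and all subsequent filtration memberships would coincide on an interval past $t^\ast$, so $\dv$ would not be forced out of $\mathcal F_{n+2}(W_t^\pi)$ for $t\in(t^\ast,t^\ast+\varepsilon)$. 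Your limit argument only works at the first level of the filtration: $\mathcal F_1$ is a closed (non-strict) condition over the \emph{fixed} set $\mathcal D$, so membership passes to the limit $t\uparrow t^\ast$; but $\mathcal F_j$ for $j\ge 2$ maximizes over $\mathcal F_{j-1}(W_t^\pi)$, a set that can jump discontinuously at $t^\ast$ (precisely because $\dv'$ newly enters $\mathcal F_1(W_{t^\ast}^\pi)$ when $z_{t^\ast}=0$). The correct picture, and the one the paper exploits, is that there is a smallest $k_0\le n$ with $\dv\not\in\mathcal F_{k_0+1}(W_{t^\ast}^\pi)$ and $\dv'\in\mathcal F_{k_0+1}(W_{t^\ast}^\pi)$: the powers agree only up to order $k_0$, which yields $\frac{d^{k}}{dt^{k}}z_{t^\ast}=0$ for $k<k_0$, and the \emph{strict} separation at order $k_0+1$ yields $\frac{d^{k_0}}{dt^{k_0}}z_{t^\ast}<0$. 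Combined with $z_t>0$ on $(t^\ast-\varepsilon,t^\ast)$ and the Taylor expansion, $k_0$ must be odd and $z_t<0$ just after $t^\ast$.

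Because the all-orders identity fails, your transfer of the sign $\tilde z_t\le 0$ (which is correctly derived for the genuine post-switch trajectory $\tilde W_t=e^{\Qb^{\dv'}(t-t^\ast)}W_{t^\ast}^\pi$) onto the function $z_t$ of \eqref{eq:systems_list} (which is evaluated along the \emph{analytic continuation} of the $\Qb^{\dv}$-dynamics past $t^\ast$) does not go through: these are different functions for $t>t^\ast$, differing already at order $k_0+1$ in their Taylor expansions at $t^\ast$. Your closing step (analyticity plus $z\not\equiv 0$ upgrades $z_t\le 0$ to $z_t<0$ on a punctured right neighbourhood) would be fine if the premise $z_t\le 0$ held, and you correctly flagged the $z\equiv 0$ exclusion as delicate, but the real obstruction is upstream. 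To repair the proof you need to work with the first index at which $\dv$ and $\dv'$ separate in the lexicographic filtration and extract the resulting strict inequality on $\frac{d^{k_0}}{dt^{k_0}}z_{t^\ast}$, rather than asserting that they never separate.
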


\begin{proof}
Take $k_0$ to be the smallest index $k\le n$ with  $ \dv\not\in\mathcal F_{k+1}(W_{t^*}^\pi)$ and $\dv'\in\mathcal F_{k+1}(W_{t^*}^\pi).$ Since $\dv'$ is optimal at $t^\ast$, we have $\dv,\dv'\in\mathcal F_{k+1}(W_{t^*}^\pi)$ for all $k<k_0$. We show inductively that
\begin{equation}
\label{eq:induction_switch}
[\Qb^{\dv}]^{k+1}W_{t^*}^\pi=[\Qb^{\dv'}]^{k+1}W_{t^*}^\pi \text{ and }\frac{d^{k}}{dt^{k}}z_{t^*} = 0 \text{ for all } 0\le k< k_0.
\end{equation}
The claim is true for $k=0$:
\begin{align*}
& \dv,\dv'\in\mathcal F_1(W_{t^*}^\pi)\Rightarrow
\Qb^{\dv}W_{t^*}^\pi = \Qb^{\dv'}W_{t^*}^\pi\\
& \Rightarrow
	(\Qb^{\dv}-\Qb^{\dv'})W_{t^*}^\pi=
	\begin{bmatrix}
	\ldots\\
	 q^a-q^b \\
	 \ldots 
	\end{bmatrix}
	W_{t^*}^\pi = 0
	\Rightarrow (q^a-q^b)W_{t^*}^\pi = 0
	\Rightarrow z_{t^*} = 0.
	\end{align*}
	Now suppose \eqref{eq:induction_switch} holds for $(k-1)$ with $k<k_0$. Then
	\begin{align*}
& \dv,\dv'\in\mathcal F_{k+1}(W_{t^*}^\pi)\Rightarrow
[\Qb^{\dv}]^{k+1}W_{t^*}^\pi = [\Qb^{\dv'}]^{k+1}W_{t^*}^\pi\\
& \Rightarrow
\Qb^{\dv}[\Qb^{\dv}]^{k}W_{t^*}^\pi = \Qb^{\dv'} [\Qb^{\dv'}]^{k}W_{t^*}^\pi
\Rightarrow^{(\ast)} \Qb^{\dv}[\Qb^{\dv}]^{k}W_{t^*}^\pi = \Qb^{\dv'} [\Qb^{\dv}]^{k}W_{t^*}^\pi\\
& \Rightarrow [\Qb^{\dv}-\Qb^{\dv'}] [\Qb^{\dv}]^{k}W_{t^*}^\pi = 0
\Rightarrow^{(**)} [\Qb^{\dv}-\Qb^{\dv'}] \frac{d^{k}}{dt^{k}}X_{t^*} = 0\\
& \Rightarrow (q^a-q^b)\frac{d^{k}}{dt^{k}}X_{t^*} = 0 \Rightarrow \frac{d^{k}}{dt^{k}} z_{t^*} = 0,
	\end{align*}
where $(\ast)$ holds due to the induction assumption and $(**)$ is true due to the differential equation \eqref{eq:systems_list}. Finally, we show that $\frac{d^{k_0}}{dt^{k_0}}z_{t^*} < 0$.
\begin{align*}
& \dv\not\in\mathcal F_{k_0+1}(W_{t^*}^\pi) \text{ and } \dv'\in\mathcal F_{k_0+1}(W_{t^*}^\pi)\Rightarrow
 [\Qb^{\dv}]^{k_0+1}W_{t^*}^\pi< [\Qb^{\dv'}]^{k_0+1}W_{t^*}^\pi\\
&  \Rightarrow \Qb^{\dv}[\Qb^{\dv}]^{k_0}W_{t^*}^\pi < \Qb^{\dv'} [\Qb^{\dv'}]^{k_0}W_{t^*}^\pi
\Rightarrow^{(\imath)}\Qb^{\dv}[\Qb^{\dv}]^{k_0}W_{t^*}^\pi< \Qb^{\dv'}[\Qb^{\dv}]^{k_0}W_{t^*}^\pi\\
& \Rightarrow [\Qb^{\dv}-\Qb^{\dv'}] \frac{d^{k_0}}{dt^{k_0}}W_{t^*}^\pi<0
\Rightarrow (q^a-q^b)\frac{d^{k_0}}{dt^{k_0}}W_{t^*}^\pi<0
\Rightarrow \frac{d^{k_0}}{dt^{k_0}}z_{t^*}<0,
\end{align*}
where $(\imath)$ holds due to \eqref{eq:induction_switch} for $k_0-1$.

Since $z_{t^\ast}=0$, we can select $\varepsilon$ such that $z_{t}>0$ for all $t\in(t^\ast-\varepsilon,t^\ast)$. 
Using Taylor expansion \eqref{eq:Taylor} and 
the facts that $\frac{d^{k_0}}{dt^{k_0}}z_{t^*}<0$ and $z_t>0$ for $t\in(t^*-\varepsilon,t^\ast)$, 
we have that $k_0$ must be an odd number, which means $t^\ast$ is non-tangential 
by Prop.~\ref{thm:tangential_derivative}. The function $z_t$ changes sign from positive to negative at $t^\ast$.
\end{proof}

\subsection{Conditional Decidability}

The decision procedure for Problem~\ref{prob:prob_main} is as follows.
Fix a CTMDP $\M = (\set{1,\ldots, n} \uplus\set{\good}, \mathcal D, \Qb)$ and a bound $B$.
We inductively construct a piecewise constant optimal policy, going forward in time.
To begin, we set the initial decision vector to $\dv^1$, where $\dv^1$ 
is selected such that $\dv^1\in \mathcal F_{n+2}(W_0^\pi)$ (Equation~\eqref{eq:lexi_set}) with $W_0^\pi$ set to the indicator vector that is $1$ at the $\good$ state and $0$ in other states.

Note that in general $\mathcal F_{n+2}(W_t^\pi)$ in \eqref{eq:lexi_set} may have finitely many elements and the choice of optimal decision at time $t$, $\dv_t\in\mathcal F_{n+2}(W_t^\pi)$ is not unique. 
Based on results of Proposition~\ref{prop:multi_dec}, any arbitrary element of $\mathcal F_{n+2}(W_t^\pi)$ can be chosen; 
but, we do not alter this choice until the picked decision vector does not belong to $\mathcal F_{n+2}(W_t^\pi)$ anymore.
We know that there is a piecewise constant optimal policy $\pi$ with finitely many switches obtained from the 
charactrization in Theorem~\ref{thm:optimal_policy}. Denote the (unknown) number of switches by $k\in\nats$.

We find $k$ as follows.
We inductively check the existence of a sequence of decision vectors $\dv^1, \ldots ,\dv^k$ and
time points $t_1,\ldots, t_{k-1}$ such that the optimal 
policy (given a lexicographical order on $\mathcal D$) switches from $\dv^i$ to $\dv^{i+1}$ at time $t_i$ but does not have any switch between the time points.
Then, we check if the optimal policy
makes at least one additional switch point in the interval $(t_k, B)$.
The check reduces the question to a number of satisfiability questions in $\theory$.
If we find an additional switch, we know that the optimal strategy has at least $k+1$ switches
and continue to check if there are further switch points. If not, we know that the optimal policy has $k$ switch points.

We need some notation.
A \emph{prefix} $\sigma_k = (\dv^1, t_1, \dv^2, t_2, \ldots, t_{k-1}, \dv^k) \in (\mathcal D \times (0, B))^*\times \mathcal D$
is a finite sequence of decision vectors from $\mathcal D$ and strictly increasing time points $0 < t_1 < t_2 <\ldots< t_{k-1} < B$
such that $\dv^i \neq \dv^{i+1}$ for $i\in\set{1,\ldots, k-1}$.
Intuitively, it represents the prefix of a piecewise constant policy with the first $k-1$ switches.
For two decision vectors $\dv, \dv'$, let
$\Delta(\dv, \dv') := \set{s \mid \dv(s) \neq \dv'(s)}$ be the states at which the actions suggested by the decision vectors
differ.
For a decision vector $\dv$, let $\dv[s\mapsto b]$ denote the decision vector that maps state $s$ to action $b$ but agrees with $\dv$ otherwise.

For a prefix $\sigma_k = (\dv^1, t_1, \dv^2, t_2, \ldots, t_{k-1}, \dv^k)$, 
a state $s\in S$, and an action $b\in \mathcal D_s$,
define 
\begin{equation*}
\label{eq:def_yk}
y_t^{s,b}(\sigma_k)=\mathbf u^T(s)([\Qb^{\dv^k}]-[\Qb^{\dv^k[s\mapsto b]}])e^{[\Qb^{\dv^k}](t-t_{k-1})}e^{[\Qb^{\dv^{k-1}}](t_{k-1}-t_{k-2})}\cdots e^{[\Qb^{\dv^1}]t_1}\mathbf u(\good),
\end{equation*}
where $\mathbf u(s)$ is a vector of dimension $n+1$ 
that assigns one to $s$ and zero to every other entry.
Observe that $y_t^{s,b}(\sigma_k)$ is a solution of a set of linear ODEs similar to $z_t$ in Equation~\eqref{eq:systems_list}:
\begin{equation}
\label{eq:systems_ODE}
\left\{
\begin{array}{lr}
\frac{d}{dt} W_t=[\Qb^{\dv^k}]W_t\\
y_t^{s,b}(\sigma_k)=\mathbf u^T(s)([\Qb^{\dv^k}]-[\Qb^{\dv^k[s\mapsto b]}])W_t,
\end{array}\right.
\end{equation}
with the condition $W_{t_{k-1}} = e^{[\Qb^{\dv^{k-1}}](t_{k-1}-t_{k-2})}\cdots e^{[\Qb^{\dv^1}]t_1}\mathbf u(\good)$.
%
%

We shall use (variants of) the predicate $\mathsf{NonTangentialZero}(y^{\cdot,\cdot}, t_1,t_2)$, but write the predicates informally 
for readability.
We need two additional predicates $\mathsf{Switch}(\sigma_k, t^*, \dv')$ and $\mathsf{NoSwitch}(\sigma_{k+1})$.
The predicate $\mathsf{Switch}$ states that, given a prefix $\sigma_k$, the first switch from $\dv^k$ to a new decision vector $\dv'$ occurs at time point $t^*>t_{k-1}$.
This new switch requires three conditions.
First, there is a simultaneous non-tangential zero at $t^*$ for all dynamical systems of the form \eqref{eq:systems_ODE} associated with $y_t^{s,\dv'(s)}(\sigma_k)$, $s\in\Delta(\dv^k,\dv')$.
Second, $t^\ast$ is the first time after $t_{k-1}$ that any of the dynamical systems have a non-tangential zero.
Finally, none of the states in $S\setminus \Delta(\dv^k, \dv')$ whose action remains the same before and after the switch, have a non-tangential zero in $(t_{k-1}, t^*]$ (up to and including $t^*$):
\begin{align*}
\mathsf{Switch}& (\underbrace{(\dv^1,t_1, \ldots, t_{k-1}, \dv^k)}_{\sigma_k}, t^*, \dv') \equiv \\
& 0 < t_1 < \ldots <t_{k-1} < B \wedge (B > t^\ast>t_{k-1})\wedge(\Delta(\dv^k, \dv') \neq \emptyset) \wedge\\
& \bigwedge_{s\in \Delta(\dv^k, \dv')}  \left(
\begin{array}{c}
	\mbox{``$y_t^{s,\dv'(s)}(\sigma_k)$ has a non-tangential zero at $t^*$''} \wedge \\ 
        \mbox{``$y_t^{s,\dv'(s)}(\sigma_k)$ has no non-tangential zero in $(t_{k-1}, t^*)$''}
\end{array} \right) \wedge \\
& \bigwedge_{s\in S \setminus \Delta(\dv^k, \dv')} \mbox{``$y_t^{s,\dv'(s)}(\sigma_k)$ has no non-tangential zero in $(t_{k-1}, t^*]$''}\\
\end{align*}
The predicate $\mathsf{NoSwitch}(\sigma_{k+1})$ states that, given a prefix 
$\sigma_{k+1}$, 
the last
decision vector $\dv^{k+1}$ of $(\sigma_{k+1})$ stays optimal and does not switch to another decision vector within the interval $(t_{k}, B)$.
This is equivalent to stating that none of the dynamical systems of the from \eqref{eq:systems_ODE} associated with $y_t^{s,b}(\sigma_{k+1})$ for $s\in S, b\in \mathcal D_s\setminus\dv^{k+1}(s)$
has a non-tangential zero in $(t_{k}, B)$:
\begin{align*}
\mathsf{NoSwitch}(\sigma_{k+1}) \equiv \bigwedge_{s,b\neq \dv^{k+1}(s)} \mbox{``$y_t^{s,b}(\sigma_{k+1})$ has no non-tangential zero in $(t_{k}, B)$''}
\end{align*}
%
We can now check if the optimal strategy has exactly $k$ switches.
The first part of the predicate written
below sets up a proper $\sigma$ and the last conjunct states that there is no further
switch after the last one.
\begin{align*}
\exists & t_1,\ldots, t_k. (0 < t_1 < t_2 \ldots < t_k < B) \wedge 
 \bigwedge_{i=1}^k \mathsf{Switch}(\underbrace{\dv^1,t_1,\ldots, \dv^i, t_i, \dv^{i+1}}_{\sigma_{i+1}}) \wedge \mathsf{NoSwitch}(\sigma_{k+1}).
\end{align*}
We can enumerate these formulas with increasing $k$ over all choices of decision vectors and stop when the above formula is valid.
%
%
At this point, we know that there is a piecewise constant optimal policy with $k$ switches, which plays the decision vectors $\dv^1,\ldots, \dv^k$.
We can make one more query to check if the probability of reaching $\good$ when playing this strategy is at least a given rational vector $r\in[0,1]^{n}$:
\begin{align}
\label{eq:final_query}
\exists t_1,\ldots, t_k.(0 < t_1 &< \ldots, t_k < B)\wedge
\bigwedge_{i=1}^k \mathsf{Switch}(\dv^1,t_1,\ldots, \dv^i, t_i, \dv^{i+1}) \wedge \mathsf{NoSwitch}(\sigma_{k+1}) \nonumber\\
&  \wedge \bigwedge_{s=1}^n \mathbf u^T(s) e^{[\Qb^{\dv^{k+1}}](B - t_k)}e^{[\Qb^{\dv^{k}}](t_{k}-t_{k-1})}\cdots e^{[\Qb^{\dv^1}]t_1}\mathbf u(\good) > r(s)
\end{align}
%
This completes the proof of conditional decidability of Problem~\ref{prob:prob_main}.

\smallskip
\noindent\textbf{Conditional Decidability for Problem~\ref{prob:prob_switch}.}
A stationary policy $\dv$ is not optimal if there is a switch point.
Using the $\mathsf{Switch}$ predicate and conditional decidability of $\theory$,
this shows conditional decidability of Problem~\ref{prob:prob_switch}.

In fact, to check the presence of a single non-tangential zero, one can avoid Theorem~\ref{th:macintyre}
and get a direct construction based on Schanuel's conjecture.
This construction is similar to \cite{joel2016} and is provided in Section~\ref{app:skolem_joel}.
Unfortunately, when there are multiple switch points, we have to existentially quantify over previous 
switch points.
Thus, the techniques of \cite{joel2016} cannot be straightforwardly extended to find a direct conditional
decision procedure for Problem~\ref{prob:prob_main}.

We do not know if there is a numerical procedure that only uses an oracle for non-tangential zeros.
The problem is that, while numerical techniques can be used to bound 
each non-tangential zero with rational intervals with arbitrary precision as well as compute the
reachability probability to arbitrary precision, we do not know how to numerically detect whether
the reachability probability in~\eqref{eq:final_query} is exactly equal to a given $r$.
%
By the Lindemann-Weierstrass Theorem \cite{lang1971}, we already know that for CTMDPs with stationary optimal
strategies, the value of reachability probability for any rational time bound $B>0$ is transcendental 
and hence $\sup_{\pi\in\Pi_B} \mathbf P_s^\pi(\reach)\neq r(s)$ for all $s\in S$.
However, we cannot prove that the reachability probability remains irrational in the general case.
\section{Lower Bound: Continuous Skolem Problem}
\label{sec:skolem_ctmdp}


\begin{problem}[Bounded Continuous-Time Skolem Problem]
\label{prob:Skolem}
Given a linear ordinary differential equation (ODE)
\begin{equation}
\label{eq:ode}
\frac{d^n}{dt^n}z_t+a_{n-1}\frac{d^{n-1}}{dt^{n-1}}z_t+\cdots+a_1\frac{d}{dt}z_t + a_0 z_t=0
\end{equation}
with rational initial conditions $z_0,\frac{dz_t}{dt}|_{t=0},\ldots,\frac{d^{n-1}z_t}{dt^{n-1}}|_{t=0}\in \mathbb{Q}$
and rational coefficients $a_{n-1},a_{n-2},\ldots,a_0\in \mathbb{Q}$ and a  
time bound $B\in \mathbb{Q}$, 
the \emph{bounded continuous Skolem problem} asks whether there exists 
$0<t^*<B$ such that it is a non-tangential zero for $z_t$.
Further, we can assume w.l.o.g.\ that $z_0 = 0$ in the initial condition.\footnote{
	The assumption is w.l.o.g.\ because given a linear ODE whose solution is $z_t$, one 
	can construct another linear ODE whose solution is $y_t = t z_t$.
	Clearly, $y_0 = 0$ and there is a non-tangential zero of $z$ in $(0,B)$ iff there is
	a non-tangential zero of $y$ in $(0, B)$. 
}
\end{problem}

We note that our definition is slightly different from the usual definition of the problem, e.g., in \cite{Bell2010,joel2016},
which simply asks for any zero (i.e., $z_{t^*}=0$), not necessarily a non-tangential one.
Our version of the bounded continuous Skolem problem is also decidable assuming $\textbf{SC}$~\cite{joel2016}.
However, there is no unconditional decidability result known for this problem, even though
we only look for a non-tangential zero.

We can encode any given linear ODE of order $n$ in the form of \eqref{eq:ode} into a set of $n$ first-order linear ODE on $X:[0,B]\rightarrow \mathbb R^n$ with
\begin{equation}
\label{eq:ss_general}
\begin{cases}
\frac{d}{dt}X_t = A X_t,\quad X_0=\left[z_0,\frac{dz_t}{dt}\Big|_{t=0},\ldots,\frac{d^{n-1}z_t}{dt^{n-1}}\Big|_{t=0}\right]^T\\
z_t = C X_t,
\end{cases}
\end{equation}
with the state matrix $A$ and output matrix $C$ are
\begin{equation}
\label{eq:ss:realization}
\begin{array}{ll}
A= \begin{bmatrix}
	0& 1 & 0 & \cdots & 0 \\
	0& 0 & 1 & \cdots & 0\\
	\vdots &\vdots & \vdots & \ddots &\vdots\\
	0& 0 & 0 & \cdots & 1\\
	-a_{0}& -a_{1} & -a_{2} & \cdots & -a_{n-1}
   \end{bmatrix},
\quad
C = \begin{bmatrix} 1 & 0 & \cdots & 0 \end{bmatrix}.
\end{array}
\end{equation}
Using the representation \eqref{eq:ss_general}, the solution of the linear ODE \eqref{eq:ode} can be written as $z_t=Ce^{At}X_0.$
Therefore, the bounded continuous-time Skolem problem can be restated as whether the expression
$C e^{At} X_0$ has a non-tangential zero in the interval $(0, B)$.
%


We now reduce the bounded continuous-time Skolem problem to Problem~\ref{prob:prob_switch}.
Given an instance \eqref{eq:ss_general}-\eqref{eq:ss:realization} of the Skolem problem of dimension $n$,
we shall construct a CTMDP over states $\set{1,\ldots, 2n} \cup\set{\good, \bad}$ and bound $B$,
and just two decision vectors $\dv^a$ and $\dv^b$ that only differ in the available actions ($a$ or $b$) at state $1$.
Our reduction will ensure that the answer of the Skolem problem has a non-tangential zero iff
there is a switch in the optimal policy in the time-bounded reachability problem for bound $B$,
and thus, iff stationary policies are not optimal.

\begin{theorem}\label{th:lower_bound}
For every instance of the bounded continuous-time Skolem problem with dynamics
$\frac{d}{dt}X_t = A X_t$, $z_t = C X_t$, initial condition $X_0$, and time bound $B$,
there is a CTMDP $\M$ such that the dynamical system has a non-tangential zero in $(0, B)$
iff the optimal strategy of the CTMDP in the time-bounded reachability problem is not stationary.
\end{theorem}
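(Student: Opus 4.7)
The plan is a many-one reduction from the bounded continuous Skolem problem to Problem~\ref{prob:prob_switch}. Given a Skolem instance in first-order companion form $\dot X = AX$, $z_t = CX_t$ (so $C = (1,0,\ldots,0)$) with $z_0 = CX_0 = 0$ (per the footnote), I shall construct a CTMDP whose backward value function under a stationary policy $\dv^a$ encodes the Skolem output $z_t$. The key identity I aim to establish is
\[
(q^a - q^b)\, W_t^{\dv^a} \;=\; -e^{-\lambda t}\, z_t
\]
for some $\lambda > 0$, where $\dv^a$ and $\dv^b$ differ only at state~$1$; non-tangential zeros of this criterion in $(0,B)$ are then exactly the non-tangential zeros of $z_t$ in $(0,B)$, which Lemma~\ref{lem:switch-points-non-tangential} already ties to switch points of the optimal policy.

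The construction uses the standard positive--negative doubling trick. Decomposing $A = A^+ - A^-$ with $A^{\pm} \ge 0$, I form the $2n \times 2n$ non-negative block matrix $M = \begin{pmatrix} A^+ & A^- \\ A^- & A^+ \end{pmatrix}$, whose orbit preserves the difference dynamics: $\dot Y = MY$ implies $\dot{(Y^+ - Y^-)} = A(Y^+ - Y^-)$. Shifting by $-\lambda I$ (with $\lambda$ large) turns $M$ into a valid generator on states $\set{1,\ldots,2n}$, once the residual row-sum deficit is absorbed by transitions to $\bad$. Transition rates into $\good$ are then chosen so that $W_t^{\dv^a}\big|_{\set{1,\ldots,2n}} = (I - e^{(M-\lambda I)t})\,u$, where $u \ge 0$ is any vector with $u^+ - u^- = X_0$. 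Using $z_0 = 0$, the first-coordinate difference evaluates to $W_t^{\dv^a}(1) - W_t^{\dv^a}(n+1) = -e^{-\lambda t}\,z_t$. Finally, I take $q^a$ to realize this generator row at state~$1$ and define $q^b$ by perturbing $q^a$ so that $q^b - q^a$ has $-1$ in the diagonal (position~$1$), $+1$ in position $n+1$, and zero elsewhere; both $q^a$ and $q^b$ remain valid generator rows, and $(q^a - q^b)W_t^{\dv^a} = W_t^{\dv^a}(1) - W_t^{\dv^a}(n+1) = -e^{-\lambda t}\,z_t$.

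The iff is then closed via the Bellman-style characterization of Theorem~\ref{thm:optimal_policy}: a stationary $\dv^a$ is optimal iff $(q^a - q^b) W_t^{\dv^a} \ge 0$ on $[0,B]$, and symmetrically for $\dv^b$. If $z_t$ has a non-tangential zero in $(0,B)$, the criterion $(q^a - q^b) W_t^{\dv^a}$ changes sign there, so $\dv^a$ fails this test; conversely, if $z_t$ has no non-tangential zero in $(0,B)$, then by analyticity (Proposition~\ref{thm:tangential_derivative}) $z_t$ is sign-definite on $(0,B)$, which certifies one of the two stationary policies. I expect the main technical obstacles to be (i) choosing $u$ and $\lambda$ so that all resulting transition rates are non-negative (which may require preprocessing $A$ or inflating $u^{\pm}$ with a common slack), and (ii) handling the ``wrong sign'' regime in which $(q^a - q^b) W_t^{\dv^a}$ is uniformly non-positive, so that $\dv^b$ is the candidate stationary optimum but its criterion $(q^a - q^b) W_t^{\dv^b}$ is not directly given by the identity above; this second point can likely be addressed by a normalization that pre-flips the sign of $C$ so that $z_t$ starts out non-positive (giving $\dv^a$ as the canonical stationary candidate), or by a symmetric dual construction that simultaneously treats $z_t$ and $-z_t$.
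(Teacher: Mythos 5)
Your construction is essentially the paper's reduction: positive/negative doubling of $A$ (your block matrix $M$ is the paper's interleaved $\phi_1(A)$ up to a permutation), a shift by $-\lambda\mathbb I$ to obtain a sub-generator, an absorbing $\good$ state whose incoming rates encode the initial condition $X_0$ (your $g=\lambda u-Mu$ with slack-inflated $u^\pm$ plays the role of the paper's $(P_2Y_2+\boldsymbol\beta)/\gamma$), a $\bad$ state absorbing the row-sum deficit, and two decision vectors differing only at state $1$ so that the first-order comparison $(q^a-q^b)W_t^{\dv^a}$ equals $z_t$ up to a positive factor $e^{-\lambda t}$ (and a sign you normalize by pre-flipping $C$, which is the counterpart of the paper's choice of the sign of $r$ in Prop.~\ref{prop:selection_r}). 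The identity you derive is correct given $X_0(1)=0$, your worry (i) is resolved exactly as you suggest (and as the paper does with $\boldsymbol\beta,\gamma$), and the direction ``no non-tangential zero $\Rightarrow$ a stationary policy is optimal'' goes through with your normalization, since then $(q^a-q^b)W_t^{\dv^a}\ge 0$ on $[0,B]$ and $\dv^a$ satisfies the sufficient condition of Theorem~\ref{thm:optimal_policy} (the identically-zero case being trivial).

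The genuine gap is in the converse direction. The theorem requires that a non-tangential zero of $z$ rule out \emph{every} stationary policy, and your CTMDP has two: $\dv^a$ and $\dv^b$. You show only that $\dv^a$ fails, via the sign change of $(q^a-q^b)W_t^{\dv^a}$; you never argue that $\dv^b$ is not optimal, and your identity says nothing about $(q^a-q^b)W_t^{\dv^b}$, which is computed along a different trajectory. This is exactly what the paper's Prop.~\ref{prop:selection_r} supplies: since $W_0^{\dv^a}=W_0^{\dv^b}=\mathbf u(\good)$ and $[\Qb^b]^k\mathbf u(\good)=[\Qb^a]^k\mathbf u(\good)$ for all $k$ up to the order $k_0$ of the first non-vanishing derivative of $z$ at $0$, the tie-break at $t=0$ strictly favors $\dv^a$ ($\mathcal F_{k_0+1}(W_0^\pi)=\{\dv^a\}$), hence $(q^a-q^b)W_t^{\dv^b}>0$ for small $t>0$ and $\dv^b$ violates its own optimality criterion near the time horizon. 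Your pre-flip of $C$ provides precisely the sign information needed for this Taylor comparison, but the step has to be carried out; without it the ``not stationary'' conclusion is only half proved. (A smaller caveat, shared with the paper: turning the first-order criterion into an if-and-only-if test for stationary optimality uses the necessity of the greedy characterization from \cite{Buchholz2011,Miller_68}, since Theorem~\ref{thm:optimal_policy} as stated is only a sufficient condition; you should cite that rather than treat it as immediate.)
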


We sketch the main ideas of the proof here. 
Consider the linear differential equation described by the state space representation in \eqref{eq:ss_general} 
with the initial condition $X_0$ that has its first element equal to zero $X_0(1) = 0$. 
Given the time bound $B>0$, to solve the bounded continuous Skolem problem, we are looking for the existence of a time $0<t^\ast<B$ 
such that $z_{t^\ast}=0$ is non-tangential.
Equivalently, we want to find a non-tangential zero for the function $C e^{At} X_0$, where 
$C = \begin{bmatrix} 1 & 0 & \cdots & 0 \end{bmatrix}$.

There are three obstacles to go from \eqref{eq:ss_general}  to generator matrices for a CTMDP.
Each generator matrix must have 
non-diagonal entries that are non-negative. The sum of each row of the matrix must be zero.
Moreover, the last state of the CTMDP must be absorbing.
None of these properties may hold for a general $A$.
We show a series of transformations that take the matrix $A$ to a matrix $P$ that is sub-stochastic. 
Then we construct the generator matrices of the CTMDP using $P$ that include the required absorbing state.
We denote by $\mathbf 0_{\mathsf m}$ and $\mathbf{1}_{\mathsf m}$ as row vectors of dimension $\mathsf m$ with all elements equal to zero and one, respectively.

\begin{theorem}
\label{thm:existP}
Suppose $A\in\mathbb Q^{n\times n}$, $X_0\in\mathbb Q^n$ and $C = [1,\mathbf 0_{n-1}]$ are given with $X_0(1) = 0$.
There are positive constants $\gamma,\lambda$ and a \emph{generator} matrix $P\in\mathbb Q^{(2n+1)\times(2n+1)}$ such that
\begin{equation}
\label{eq:make_CTMDP}
 Ce^{At}X_0 = \gamma e^{\lambda t} \left[C' e^{Pt}Y_0\right],\quad C' =[1,-1,\mathbf 0_{2n-1}],\quad Y_0 = [\mathbf 0_{2n},1]^T.
\end{equation}
\end{theorem}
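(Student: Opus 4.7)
The plan is to construct $P$, $\gamma$, $\lambda$ in three stages, progressively massaging the (possibly indefinite) matrix $A$ into a valid generator. Stage~1 applies the classical doubling trick to remove negative entries: split $A = A^+ - A^-$ and $X_0 = X_0^+ - X_0^-$ into their respective positive and negative parts, all entrywise in $\mathbb Q_{\geq 0}$; form the interleaved $2n \times 2n$ non-negative matrix $\hat A$ whose $2\times 2$ sub-block at position $(i,j)$ equals $\bigl[\begin{smallmatrix} A^+_{ij} & A^-_{ij} \\ A^-_{ij} & A^+_{ij}\end{smallmatrix}\bigr]$, and set $\hat Y_0 = (X_0^+(1),\,X_0^-(1),\dots,X_0^+(n),\,X_0^-(n))^{T}$. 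Differentiating $Y^+_t - Y^-_t$ and using $A^+ - A^- = A$ gives the intermediate identity $Ce^{At}X_0 = \hat C\, e^{\hat A t}\hat Y_0$ with $\hat C = [1,-1,0,\dots,0]\in\mathbb Q^{1\times 2n}$. The hypothesis $X_0(1) = 0$ forces $\hat Y_0(1) = \hat Y_0(2) = 0$, matching the zeros of $Y_0 = e_{2n+1}$ in the first two coordinates.

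In Stage~2 I shift and augment $\hat A$ into a $(2n+1)\times(2n+1)$ generator. Choose a rational $\lambda > 0$ strictly larger than every row sum of $\hat A$ and define
\[
P \;=\; \begin{bmatrix} \hat A - \lambda I & u \\ r^{T} & -\delta \end{bmatrix},
\]
with auxiliary data $u\in\mathbb Q^{2n}_{\geq 0}$, $r\in\mathbb Q^{2n}_{\geq 0}$, and $\delta = \sum_i r_i$ to be pinned down so that every row of $P$ sums to zero and so that the $(2n{+}1)$-th ``source'' state couples to the first $2n$ coordinates in a way that re-creates the initial condition $\hat Y_0$. By construction $P$ has non-negative off-diagonals and vanishing row sums, hence is a bona fide generator, and all entries remain rational. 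Stage~3 verifies the identity through the Laplace transform, reducing it to
\[
C(sI - A)^{-1}X_0 \;=\; \gamma\, C'\bigl((s-\lambda)I - P\bigr)^{-1}Y_0.
\]
The $0$ eigenvalue of $P$ produces a would-be pole of the right-hand side at $s = \lambda$, but this pole cancels automatically because $C'\mathbf 1 = 1 - 1 = 0$; matching the remaining pole/residue structure pins down $u$, $r$, $\delta$ and the scalar $\gamma > 0$, after which a direct calculation of $e^{Pt}Y_0$ in the time domain closes the argument.

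The main obstacle lies in the compatible choice of $u$ and $r$ in Stage~2. The symmetric doubling of Stage~1 carries a built-in $+\!/\!-$ swap symmetry, so if $u$ is forced by the row-sum-zero constraint alone then $u_1 = u_2$, giving $\hat C\, u = 0$ and making $\hat C Z_t$ vanish too fast at $t = 0$ for the transfer-function matching to succeed. The heart of the proof is therefore to show that one can nevertheless produce a compatible rational non-negative choice of $(u, r, \delta)$---for instance by letting the source state both absorb the row-sum deficit and carry an asymmetric ``injection'' of the initial condition---so that the pole/residue matching above has a solution with positive rational $\gamma$ and $\lambda$ while $P$ remains a generator with rational entries.
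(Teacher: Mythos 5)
There is a genuine gap, and it sits exactly where you place ``the heart of the proof.'' Your Stage~1 doubling is the same map the paper uses (its $\phi_1$), and lifting $X_0$ either as $(X_0^+(i),X_0^-(i))$ pairs or as $(X_0(i),0)$ pairs both work. But your Stage~2 parametrization cannot be completed. Once you require every one of the first $2n$ rows of $P=\bigl[\begin{smallmatrix}\hat A-\lambda I & u\\ r^{T} & -\delta\end{smallmatrix}\bigr]$ to sum to zero, the injection column is forced: $u_i=\lambda-\sum_j\hat A_{ij}$, and since the doubled matrix has equal row sums in each pair, $u_{2i-1}=u_{2i}$ for all $i$. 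The swap-symmetric subspace (vectors with equal entries in each pair) is invariant under $\hat A$, because its $2\times2$ blocks have the form $\bigl[\begin{smallmatrix}\alpha&\beta\\ \beta&\alpha\end{smallmatrix}\bigr]$, and it is annihilated by $\hat C=[1,-1,0,\dots,0]$. Since $Y_0=e_{2n+1}$, the first $2n$ coordinates of $e^{Pt}Y_0$ evolve inside the invariant subspace generated by $u$, whatever $r,\delta$ do, so $C'e^{Pt}Y_0\equiv 0$ for every admissible choice of $r,\delta,\gamma,\lambda$. Hence no pole/residue matching in Stage~3 can recover a nonzero $Ce^{At}X_0$: the obstacle you flag is not a residual technicality but rules out your ansatz, and the ``asymmetric injection'' you gesture at is incompatible with the zero-row-sum constraint you impose on $P$ itself.

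The paper resolves this by dropping exactly that constraint: $P$ is only required to be sub-stochastic (non-negative off-diagonals, row sums $\le 0$), and the true generator is obtained afterwards by appending one more absorbing state (the column $\Theta$ in $\Qb^a$). Concretely, the last row of $P$ is zero, and the injection column is chosen as $(P_2Y_2+\boldsymbol\beta)/\gamma$ with $P_2=\phi_1(A)-\lambda I$, $Y_2$ the lifted initial condition, $\boldsymbol\beta$ the smallest pairwise-constant non-negative vector making $P_2Y_2+\boldsymbol\beta\ge 0$, and $\gamma$ its largest entry. The swap-symmetric correction $\boldsymbol\beta$ is invisible to the output for all time (the same invariance as above, now used in one's favor), while the asymmetric part $P_2Y_2$ reproduces, term by term in the exponential series, $C''e^{P_2t}Y_2/\gamma=e^{-\lambda t}\,Ce^{At}X_0/\gamma$, the hypothesis $X_0(1)=0$ killing the $k=0$ term. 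To repair your proof you would need to supply some such mechanism—relax the generator requirement on $P$ (or add a further state) and introduce a symmetric non-negativity correction—and then verify non-negativity, rationality, and the series identity; as written, the decisive construction is missing and the proposed parametrization provably cannot deliver it.
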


\begin{remark}
The first equality in \eqref{eq:make_CTMDP} ensures that nature of zeros of the two functions $Ce^{At}X_0$ and $C' e^{Pt}Y_0$ are the same. 
If one of them has a non-tangential zero at $t^\ast$ the other one will also have a non-tangential zero at $t^\ast$. 
To see this, suppose $Ce^{At^\ast}X_0 = 0$ and $Ce^{At}X_0$ changes sign at $t^\ast$. 
The same things happen to $C' e^{Pt}Y_0$ due to the fact that the two functions are different with only a positive factor of $\gamma e^{\lambda t}$.     
\end{remark}

Without loss of generality, we assume the element $A_{11}$ is negative. 
This assumption is needed when constructing the CTMDP in the sequel.
If the assumption does not hold, we can always replace $A$ with $A-\lambda_0\mathbb I_n$ for a sufficiently large $\lambda_0$ and merge 
$\lambda_0$ with $\lambda$ in \eqref{eq:make_CTMDP}. 
Define the map $\phi_1:\cup_n\mathbb Q^{n\times n}\rightarrow \cup_n\mathbb Q_{\ge 0}^{2n\times 2n}$ such that $\phi_1(A)$ is obtained by replacing each entry $A_{ij}$ with the matrix $\begin{bmatrix}\alpha_{ij} &\beta_{ij}\\\beta_{ij}&\alpha_{ij}\end{bmatrix}$, 
where $\alpha_{ij}=max(A_{ij},0)$ and $\beta_{ij}=max(-A_{ij},0)$.
The map $\phi_1$ maps any square matrix to another matrix with non-negative entries (\cite{Akshay2015}).
Also define the map $\phi_2:\cup_n\mathbb Q^{n}\rightarrow \cup_n\mathbb Q^{2n}$ such that $\phi_2(X)$ replaces each entry $X(i)$ with two entries $[X(i),0]^T$.  
\begin{proposition}
\label{prop:reduc1}
We have $C'' e^{\phi_1(A)t}Y_2 = Ce^{At}X_0$ with $Y_2 := \phi_2(X_0)$ and $C'' := [1,-1,\mathbf 0_{2n-2}]$. 
\end{proposition}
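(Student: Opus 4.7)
The plan is to exhibit a linear map $\Phi\colon \mathbb{R}^{2n}\to\mathbb{R}^n$ that intertwines $\phi_1(A)$ with $A$ and is a left inverse of $\phi_2$. Specifically, I would define $\Phi(y)_i := y_{2i-1}-y_{2i}$ for $i=1,\ldots,n$. Observe immediately that $\Phi\circ\phi_2 = I_n$ (since the even-indexed entries of $\phi_2(X_0)$ are zero), and that $C'' y = y_1 - y_2 = \Phi(y)_1$ for every $y\in\mathbb{R}^{2n}$. Thus the target identity $C'' e^{\phi_1(A)t}Y_2 = C e^{At}X_0$ will follow if I can show that $\Phi(e^{\phi_1(A)t}\phi_2(X_0)) = e^{At}X_0$.

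First I would establish the key intertwining relation $\Phi\circ\phi_1(A) = A\circ\Phi$. This is a direct block-by-block calculation: using $\alpha_{ij}=\max(A_{ij},0)$ and $\beta_{ij}=\max(-A_{ij},0)$, one has $\alpha_{ij}-\beta_{ij}=A_{ij}$. Therefore, for any $y\in\mathbb{R}^{2n}$,
\begin{align*}
\Phi(\phi_1(A)\,y)_i &= \sum_{j=1}^n\bigl(\alpha_{ij} y_{2j-1}+\beta_{ij} y_{2j}\bigr) - \sum_{j=1}^n\bigl(\beta_{ij} y_{2j-1}+\alpha_{ij} y_{2j}\bigr)\\
&= \sum_{j=1}^n(\alpha_{ij}-\beta_{ij})(y_{2j-1}-y_{2j}) = \sum_{j=1}^n A_{ij}\Phi(y)_j = (A\,\Phi(y))_i.
\end{align*}

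Second, I would lift this to the flow. Set $y(t):=e^{\phi_1(A)t}\phi_2(X_0)$, so that $\dot y(t)=\phi_1(A)y(t)$ and $y(0)=\phi_2(X_0)$. Applying $\Phi$ and using the intertwining relation,
\begin{equation*}
\frac{d}{dt}\Phi(y(t)) = \Phi(\dot y(t)) = \Phi(\phi_1(A)y(t)) = A\,\Phi(y(t)),\qquad \Phi(y(0)) = \Phi(\phi_2(X_0)) = X_0.
\end{equation*}
By uniqueness of solutions of linear ODEs, $\Phi(y(t))=e^{At}X_0$. Reading off the first coordinate yields $C''y(t) = \Phi(y(t))_1 = (e^{At}X_0)_1 = C e^{At}X_0$, which is exactly the claim.

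There is no real obstacle here; the proposition is essentially a verification that the doubling construction $\phi_1$ is a semiconjugacy of the dynamics via the ``difference of twin coordinates'' projection $\Phi$, and everything reduces to the identity $\alpha_{ij}-\beta_{ij}=A_{ij}$ together with commutativity of $\Phi$ with exponentiation that follows automatically from the intertwining. The only mild care is in writing the block structure of $\phi_1(A)$ unambiguously when expanding the $i$-th pair of rows.
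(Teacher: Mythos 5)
Your proof is correct and takes essentially the same approach as the paper: both rest on the intertwining identity coming from $\alpha_{ij}-\beta_{ij}=A_{ij}$, with your column-space projection $\Phi$ being the transpose of the paper's row-vector computation $[\alpha_1,-\alpha_1,\ldots,\alpha_n,-\alpha_n]\phi_1(A)^k=[\beta_1,-\beta_1,\ldots,\beta_n,-\beta_n]$. The only cosmetic difference is that you lift the intertwining to the exponential via uniqueness of ODE solutions, whereas the paper does it termwise on the power series.
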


\begin{proof}
We can show inductively that for any $k\in\set{0,1,2,\ldots}$, $\set{\alpha_1,\alpha_2,\ldots,\alpha_n}$, and $[\beta_1,\beta_2,\ldots, \beta_n]:= [\alpha_1,\alpha_2,\ldots, \alpha_n] A^k$, we have
\begin{align*}
& [\alpha_1,-\alpha_1,\alpha_2,-\alpha_2,\ldots, \alpha_n,-\alpha_n] \phi_1(A)^k= [\beta_1,-\beta_1,\beta_2,-\beta_2,\ldots, \beta_n,-\beta_n].
\end{align*}
Substitute $[\alpha_1,\alpha_2,\ldots,\alpha_n]$ by $C$ and $[\beta_1,\beta_2,\ldots, \beta_n]= CA^k$ to get
\begin{align*}
& C'' \phi_1(A)^k Y_2 = C''\phi_1(A)^k \phi_2(X_0) = [\beta_1,-\beta_1,\beta_2,-\beta_2,\ldots, \beta_n,-\beta_n] \phi_2(X_0)\\
& = [\beta_1,\beta_2,\ldots, \beta_n]X_0 = C A^k X_0\\
& \Rightarrow C'' e^{\phi_1(A)t}Y_2  = \sum_{k=0}^\infty \frac{t^k}{k!}C''\phi_1(A)^kY_2 = \sum_{k=0}^\infty \frac{t^k}{k!}C A^k X_0 = C e^{At} X_0.
\end{align*}
\end{proof}

Next, we define
$\lambda:=\max_i\sum_{j=1}^n|A_{ij}|+1$,
$ P_2 := \phi_1(A)-\lambda \mathbb I_n$,
and the vector $\boldsymbol\beta\in\mathbb Q^{2n}$ with
\begin{equation*}\label{eq:beta}
\boldsymbol\beta(2i-1)=\boldsymbol\beta(2i)=\max (0,-P_2Y_2(2i-1),-P_2Y_2(2i))\quad 1\leq i \leq n.
\end{equation*}
Note that the row sum of $P_2$ is at most $-1$ and $\boldsymbol\beta+P_2Y_2$ is element-wise non-negative with the maximum element
\begin{equation*}
\label{eq:gamma}
\gamma:=\max_i P_2Y_2(i) + \boldsymbol\beta(i).
\end{equation*}
\begin{proposition}
\label{prop:reduc2}
The above choices of $\lambda,\gamma$ and the matrix 
\begin{equation*}
\label{eq:P4}
P:=
\begin{bmatrix}
P_2 & \vdots & (P_2Y_2+\boldsymbol\beta)/\gamma\\
\dots & \dots & \dots \\
\mathbf{0} &\vdots & 0
\end{bmatrix}
\end{equation*}
satisfy \eqref{eq:make_CTMDP} in Theorem~\ref{thm:existP}. Moreover, $P$ is row sub-stochastic.
\end{proposition}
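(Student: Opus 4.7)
My plan is to establish the two claims separately: the identity in \eqref{eq:make_CTMDP} and the sub-stochasticity of $P$.

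For the identity, I would start from Proposition~\ref{prop:reduc1}, which gives $Ce^{At}X_0 = C''e^{\phi_1(A)t}Y_2$. Since $P_2 = \phi_1(A) - \lambda\mathbb{I}$ commutes with $\lambda\mathbb{I}$, this equals $e^{\lambda t}\,C''e^{P_2 t}Y_2$, so it suffices to show $C''e^{P_2 t}Y_2 = \gamma\,C'e^{Pt}Y_0$. Write $P$ in block form as $\begin{bmatrix}P_2 & v \\ \mathbf 0 & 0\end{bmatrix}$ with $v = (P_2 Y_2 + \boldsymbol\beta)/\gamma$. Because the last row of $P$ is zero, $P^k = \begin{bmatrix}P_2^k & P_2^{k-1}v \\ \mathbf 0 & 0\end{bmatrix}$ for every $k \geq 1$, and $Y_0 = [\mathbf 0_{2n},1]^T$ picks out the last column of $e^{Pt}$. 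A direct computation then yields
\[
\gamma\,C'e^{Pt}Y_0 \;=\; C''(e^{P_2 t} - I)Y_2 \;+\; C''\sum_{k=0}^\infty \frac{t^{k+1}}{(k+1)!}\,P_2^k\boldsymbol\beta .
\]
Two facts finish the identity. First, $C''Y_2 = 0$, because $\phi_2$ places $X_0(1)=0$ in both of the first two entries of $Y_2$, so the first summand collapses to $C''e^{P_2 t}Y_2$. Second, every term $C''P_2^k\boldsymbol\beta$ in the right-hand sum vanishes.

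The second fact is the heart of the proof and the step requiring the most care. I would derive it from the invariant that every pair-symmetric vector $u\in\mathbb{Q}^{2n}$ (meaning $u(2i-1) = u(2i)$ for all $i$) is mapped by $P_2$ to another pair-symmetric vector. This comes from the $2{\times}2$ block structure of $\phi_1(A)$: each block carries $\alpha_{ij}$ on its diagonal pair $(2i-1,2j-1),(2i,2j)$ and $\beta_{ij}$ on its antidiagonal pair, so rows $2i-1$ and $2i$ of $\phi_1(A)$ act identically on any pair-symmetric vector, and subtracting $\lambda\mathbb{I}$ preserves the property. Since $\boldsymbol\beta(2i-1)=\boldsymbol\beta(2i)$ by construction, induction yields $P_2^k\boldsymbol\beta$ pair-symmetric for all $k\ge 0$, and the row vector $C'' = [1,-1,\mathbf 0_{2n-2}]$ annihilates every pair-symmetric vector. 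Together the two facts give $\gamma\,C'e^{Pt}Y_0 = C''e^{P_2 t}Y_2$, hence \eqref{eq:make_CTMDP}.

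For the sub-stochasticity of $P$, I would verify the standard three conditions of a sub-generator matrix. The off-diagonal entries of $P_2$ are inherited from $\phi_1(A)$, which is entrywise non-negative, and the last column of $P$ is $(P_2 Y_2 + \boldsymbol\beta)/\gamma \geq 0$ by the definition of $\boldsymbol\beta$ and positivity of $\gamma$; the last row of $P$ is zero. Each diagonal entry of $P_2$ equals $\alpha_{kk}-\lambda$, and $\alpha_{kk} \leq \sum_j |A_{kj}| \leq \lambda - 1$, so all first $2n$ diagonal entries of $P$ are at most $-1$, while the bottom-right entry is $0$. Finally, for $1 \leq i \leq 2n$, the row sum of $P_2$ equals $\sum_j |A_{kj}| - \lambda \leq -1$, and the corresponding entry in the last column is at most $1$ by the definition of $\gamma$, so each such row of $P$ sums to at most $0$; the last row sums to $0$. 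The only non-routine step in the entire argument is the pair-symmetry invariant for $P_2$ that drives the cancellation of the $\boldsymbol\beta$ contributions; everything else is careful but routine accounting.
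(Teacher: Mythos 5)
Your proof is correct and follows essentially the same route as the paper's: the block structure of $P^k$ acting on $Y_0$, the split of the exponential series into the $Y_2$ part and the $\boldsymbol\beta$ part, the vanishing of $C''Y_2$ from $X_0(1)=0$, the pairwise-equal-entries cancellation of $C''P_2^k\boldsymbol\beta$, and the same row-sum estimate for sub-stochasticity. If anything, your pair-symmetry invariant makes explicit the induction that the paper only asserts ``due to the particular structure of $P_2$ and $\boldsymbol\beta$,'' which is a welcome clarification rather than a deviation.
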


\begin{proof}
We can easily show by induction that
\begin{equation*}
P^k Y_0=\begin{bmatrix}
P_{2}^{k-1}(P_2Y_2+\boldsymbol\beta)/\gamma\\
0
\end{bmatrix},
\quad\forall k\in\set{1,2,\ldots}.
\end{equation*}
\begin{equation*}
C'e^{Pt}Y_0 = \sum_{k=0}^\infty \frac{t^k}{k!}C'P^k Y_0 = C'Y_0 + C'' \sum_{k=1}^\infty \frac{t^k}{k!}P_{2}^{k-1}(P_2Y_2+\boldsymbol\beta)/\gamma, 
\end{equation*}
where $C'':= [1,-1,\mathbf 0_{2n-2}]$ is the same vector as $C'$ but the last element is eliminated.
\begin{align*}
C'e^{Pt}Y_0 =  C'Y_0 +   C''e^{P_2t}Y_2/\gamma - C''Y_2/\gamma + \sum_{k=1}^\infty \frac{t^k}{k!}C''P_{2}^{k-1}\boldsymbol\beta/\gamma.
\end{align*}
The term $C'Y_0$ is zero by simple multiplication of the two vectors. $C''Y_2 = C''\phi_2(X_0) = X_0(1)$, which is also assumed to be zero. Finally, we see by induction that for all $k\in\set{1,2,\ldots}$, the elements $(2i-1)$ and $2i$ of the matrix $P_{2}^{k-1}\boldsymbol\beta$ are equal due to the particular structure of $P_2$ and $\boldsymbol\beta$. Therefore, the last sum in the above is also zero and we get
 \begin{align*}
C'e^{Pt}Y_0 & = C''e^{P_2t}Y_2/\gamma = C''e^{\phi_1(A)t-\lambda\mathbb I t}\phi_2(X_0)/\gamma \\
& = C''e^{\phi_1(A)t}\phi_2(X_0) e^{-\lambda t}/\gamma = C e^{At X_0} e^{-\lambda t}/\gamma.
\end{align*}
To show that $P$ is a sub-stochastic matrix, we recall that $P_2Y_2+\boldsymbol\beta\ge 0$ with maximum element $\gamma$. Then
\begin{equation*}
P_2\times\mathbf 1_{2n} + (P_2Y_2+\boldsymbol\beta)/\gamma\le  \phi_1(A)\mathbf 1_{2n}-\lambda\mathbf 1_{2n}+\mathbf 1_{2n} =  \phi_1(A)\mathbf 1_{2n} - \max_i\sum_{j}|A_{ij}|\le 0.
\end{equation*}
\end{proof}

As the last step, we add an additional row and column to $P$ to make it stochastic:
\begin{equation*}
\label{eq:Qa}
\Qb^a :=
\begin{bmatrix}
P_2& \vdots & \Theta &  \vdots &  (P_2Y_2+\boldsymbol\beta)/\gamma\\
\dots &  & \dots &  & \dots \\
\mathbf{0}_{2\times 2n} &\vdots &\mathbf{0}_{2\times 1} &\vdots & \mathbf{0}_{2\times 1} 
\end{bmatrix},
  \bar C=\begin{bmatrix}1&-1& \mathbf 0_{2n}\end{bmatrix}, \bar Y_0= \begin{bmatrix}\mathbf 0_{2n+1} \\ 1\end{bmatrix},
\end{equation*}
where $\Theta$ has non-negative entries and is such that $\Qb^a$ is stochastic (sum of elements of each row is zero). The added row and column correspond to an absorbing state for a CTMDP with no effect on reachability probability: $\bar C e^{t\Qb^a}\bar Y_0 = C' e^{Pt}Y_0$.
 
 Next, we obtain a second generator matrix for the CTMDP. Define $\Qb^b := \Qb^a+ K$ with
 \begin{equation*}
\label{eq:K}
K := 
\begin{bmatrix}
- r& r & \mathbf 0_{2n} \\
\mathbf 0_{(2n+1)\times 1} & \mathbf 0_{(2n+1)\times 1}   & \mathbf 0_{(2n+1)\times 2n}
\end{bmatrix},
\end{equation*}
Note that $\Qb^b$ has exactly the same transition rates as in $\Qb^a$ except the transition from state $1$ to state $2$, which is changed by $r$.

\begin{remark}
We assumed w.l.o.g. that $A_{11}$ is negative. The construction of $P_2,P,\Qb^a$ results in a positive value for $\Qb^a_{12}$. Therefore, it is possible to select both negative and positive values for $r$ such that $\Qb^b_{12} = \Qb^a_{12} + r\ge 0$.
\end{remark}

\medskip
\noindent\textbf{Construction of the CTMDP.} 
The CTMDP $\M$ has $2n+2$ states, corresponding to the rows of $\Qb^a$ and $\Qb^b$, with the absorbing state $2n+2$ associated with the $\good$ state and the absorbing state $2n+1$ with reachability probability equal to zero. 
We shall set the time bound to be $B$. $\mathcal D_s$ the set of actions that can be 
taken in state $s\in\set{2,3,\dots,2n+2}$ is singleton and $\mathcal D_1=\set{a,b}$. 
The set of decision vectors has two elements $\mathcal D = \set{\dv^a,\dv^b}$ corresponding to the actions $a,b$ taken at state $1$. For simplicity, we denote the generator matrices of these decision vectors by $\Qb^a$ and $\Qb^b$, respectively. 
Moreover, the two actions $a,b$ have the same transition rates for jumping from state $1$ to other states, 
except giving different rates $r_a, r_b$ for jumping from $1$ to $2$ such that $r_b - r_a=r$.

The optimal policy $\pi$ takes decision vector $\dv_t\in \mathcal D$ at time $B-t$ 
such that $\dv_t\in\mathcal F_{n+2}(W_t^\pi)$ for all $t\in [0,B]$ as defined in \eqref{eq:lexi_set}.
\begin{proposition}
\label{prop:selection_r}
Let $r$ have the same sign of the first non-zero element of the set $\set{\bar C\bar Y_0,\bar C\Qb^a\bar Y_0, \bar C(\Qb^a)^2\bar Y_0,\ldots}$ and such that $\Qb^a_{12} + r\ge 0$. 
This particular selection of $r$ results in the optimality of $\dv^a$ at $t=0$.
\end{proposition}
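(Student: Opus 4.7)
The plan is to verify $\dv^a \in \mathcal F_{n+2}(\bar Y_0)$ by comparing the vectors $(\Qb^a)^k \bar Y_0$ and $(\Qb^b)^k \bar Y_0$ lexicographically through the nested sets of \eqref{eq:lexi_set}, exploiting the very restricted rank-one nature of the difference $K = \Qb^b - \Qb^a$.

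First I would record the key algebraic fact about $K$: since $K$ is supported only in entries $(1,1) = -r$ and $(1,2) = r$, for any column vector $v$ of the appropriate dimension, $Kv$ vanishes outside state $1$, where it equals $-r(v(1) - v(2)) = -r\,\bar C v$. In particular $Kv = 0$ whenever $\bar C v = 0$. Since $\mathcal D_s$ is a singleton for every $s \neq 1$, the rows of $\Qb^a$ and $\Qb^b$ coincide outside state $1$, so the element-wise maximization defining each $\mathcal F_j(\bar Y_0)$ is automatically tied at every state $s \neq 1$ and is decided purely by the comparison at state $1$.

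Next, setting $u_k := (\Qb^a)^k \bar Y_0$ and $w_k := (\Qb^b)^k \bar Y_0$, a routine induction on $k$ yields the telescoping identity
\begin{equation*}
w_k - u_k = \sum_{j=0}^{k-1} (\Qb^b)^{k-1-j}\, K u_j.
\end{equation*}
Let $k_0$ be the smallest index with $\bar C u_{k_0} \neq 0$; since $\bar C \bar Y_0 = 0$, we have $k_0 \ge 1$. For every $j < k_0$ the hypothesis $\bar C u_j = 0$ forces $K u_j = 0$, so the sum vanishes for $k \le k_0$, giving $u_k = w_k$ for $k = 0,1,\ldots,k_0$. Consequently $\dv^a$ and $\dv^b$ produce identical vectors at each of the first $k_0$ lexicographic levels and both belong to $\mathcal F_j(\bar Y_0)$ for every $j \le k_0$.

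The decisive step is level $k_0 + 1$, where only the final term of the telescoping sum survives, giving $w_{k_0+1} - u_{k_0+1} = K u_{k_0}$. By the first observation, this difference is zero outside state $1$ and equals $-r\,\bar C u_{k_0}$ at state $1$. The hypothesis that $r$ shares the sign of $\bar C u_{k_0}$ is precisely what is needed to make $-r\,\bar C u_{k_0} < 0$, so $u_{k_0+1}(1) > w_{k_0+1}(1)$ while the two vectors agree elsewhere. Thus $\dv^a$ strictly dominates $\dv^b$ at state $1$ in the element-wise maximization defining $\mathcal F_{k_0+1}(\bar Y_0)$ and ties at every other state, which places $\dv^a \in \mathcal F_{k_0+1}(\bar Y_0)$ and excludes $\dv^b$. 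By nesting, $\dv^a \in \mathcal F_{n+2}(\bar Y_0)$ whenever $k_0 + 1 \le n+2$; in the remaining degenerate cases ($k_0 + 1 > n+2$, or no such $k_0$ exists because $\bar C u_k \equiv 0$), the identity $u_k = w_k$ holds throughout $0 \le k \le n+2$ and both decision vectors tie in $\mathcal F_{n+2}(\bar Y_0)$, so $\dv^a$ is optimal in either case. No single step is hard; the only care needed is isolating the asymmetry to state $1$ via $(Kv)(1) = -r\,\bar C v$ and recognising that the first nonzero entry of $\set{\bar C (\Qb^a)^k \bar Y_0}_{k \ge 0}$ is exactly what tips the lexicographic comparison — and hence exactly the information the hypothesis on the sign of $r$ captures.
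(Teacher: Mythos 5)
Your proof is correct and follows essentially the same route as the paper: both arguments rest on the observation that $Kv$ is supported only on state $1$ with value $-r\,\bar C v$, establish $[\Qb^b]^k\bar Y_0=[\Qb^a]^k\bar Y_0$ for $k\le k_0$ (the paper by direct induction, you by the equivalent telescoping identity), and then conclude that at level $k_0+1$ the difference is exactly $Ku_{k_0}$, whose first entry $-r\,\bar C[\Qb^a]^{k_0}\bar Y_0$ is negative by the sign hypothesis on $r$. Your explicit treatment of the degenerate cases (no such $k_0$, or $k_0+1>n+2$) is a minor addition the paper handles only by a side remark.
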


\begin{proof}
We have $W_0^\pi = \bar Y_0$ and $\mathcal F_k(W_0^\pi) =  \arg\max_{\dv} [\Qb^{\dv}]^k \bar Y_0$. Then, we need to compare $[\Qb^a]^k \bar Y_0$ with $[\Qb^b]^k \bar Y_0$ for different values of $k$ and see which one gives the first highest value. These two are the same for $k=1$ and
$\mathcal F_1(W_0^\pi) =  \arg\max_{\dv} \Qb^{\dv} \bar Y_0 = \{\dv^a,\dv^b\}$.
Suppose For $k_0>1$ is the smallest index such that $\bar C[\Qb^a]^{k_0}\bar Y_0\neq 0$. It can be shown inductively that $[{\Qb^b}]^k \bar Y_0 = [{\Qb^a}]^k \bar Y_0$ for all $1\le k\le k_0$: 
\begin{align*}
[{\Qb^b}]^k \bar Y_0 & = {\Qb^b}[{\Qb^b}]^{k-1} \bar Y_0 = (\Qb^a+K)[{\Qb^b}]^{k-1} \bar Y_0 = (\Qb^a+K)[{\Qb^a}]^{k-1} \bar Y_0 \\
& = [{\Qb^a}]^k \bar Y_0 + K[{\Qb^a}]^{k-1} \bar Y_0= 
[{\Qb^a}]^k \bar Y_0 - r \begin{bmatrix}
\bar C [{\Qb^a}]^{k-1} \bar Y_0 \\
\mathbf 0_{(2n+1)\times 2n}
\end{bmatrix}
 = [{\Qb^a}]^k \bar Y_0.
\end{align*}
This means $\mathcal F_k(W_0^\pi) =  \arg\max_{\dv} [\Qb^{\dv}]^k \bar Y_0 = \{\dv^a,\dv^b\}$ for all $1\le k\le k_0$. We have for $k=k_0+1$
\begin{align*}
[{\Qb^b}]^{k_0+1} \bar Y_0 & =
[{\Qb^a}]^{k_0+1} \bar Y_0 - r \begin{bmatrix}
\bar C [{\Qb^a}]^{k_0} \bar Y_0 \\
\mathbf 0_{(2n+1)\times 2n}.
\end{bmatrix}
\end{align*}
The first element of $[{\Qb^b}]^{k_0+1} \bar Y_0$ is strictly less than the first element of $[{\Qb^a}]^{k_0+1} \bar Y_0$ since $r$ has the same sign as $\bar C [{\Qb^a}]^{k_0} \bar Y_0$. Thus $\mathcal F_{k_0+1}(W_0^\pi) =  \arg\max_{\dv} [\Qb^{\dv}]^{k_0+1} \bar Y_0 = \{\dv^a\}$.
\end{proof}

Note that the Skolem problem is trivial with the solution $z_t=0$ for all $t\in[0,B]$ if all the elements of the set $\set{\bar C\bar Y_0,\bar C\Qb^a\bar Y_0, \bar C(\Qb^a)^2\bar Y_0,\ldots}$ are zero.

Prop.~\ref{prop:selection_r} guarantees existence of an $\varepsilon\in(0,B)$ such that $W_t^\pi$ satisfies
\begin{equation*}
\frac{d}{dt} W_t^\pi = \Qb^a W_t^\pi\quad \forall t\in(0,\varepsilon),
\end{equation*}
with the initial condition $W_0^\pi(2n+2) =1$ and $W_0^\pi(s) =0$ for all $s\in\{1,2,\ldots,2n+1\}$.

To check if the optimal policy switches to $\dv^b$ at some time point, we should check if there is $t^\ast<B$ such 
that $\dv^b\in \mathcal F_{n+2}(W_{t^*}^\pi)$. 
This is equivalent to having $t^*$ being non-tangential for the maximization in $\mathcal F_1(W_t^\pi)$, 
which means $t^*$ is non-tangential for the equation
\begin{equation*}
\Qb^a W_t^\pi =  \Qb^b W_t^\pi \Leftrightarrow K W_t^\pi = 0  \Leftrightarrow \bar C W_t^\pi = 0.
\end{equation*}
Summarizing the above derivations, we have the following set of ODEs
\begin{equation}
\label{eq:summary1}
\frac{d}{dt} W_t^\pi = \Qb^a W_t^\pi\quad, W_0^\pi = \bar Y_0,\quad z_t = \bar C W_t^\pi.
\end{equation}
The optimal policy for CTMDP $\M$ switches from $\dv^a$ to $\dv^b$ at some time point $t^\ast$ if and only if $z_t$ in \eqref{eq:summary1} has a non-tangential zero in $(0,B)$
if and only if the original dynamics $Ce^{At}X_0$ has a non-tangential zero in $(0, B)$.
This completes the proof of Theorem~\ref{th:lower_bound}.

\section{Appendix: A Direct Algorithm for Problem~\ref{prob:prob_switch}}\label{app:skolem_joel}

We now show a ``direct'' method for decidability of Problem~\ref{prob:prob_switch} based on Schanuel's conjecture but
without relying on the decidability of $\theory$.
As stated before, a switch point in a strategy corresponds to the 
existence of a non-tangential zero for the functions $y_t^{s,b}(\dv^1)$ 
for $s\in S$ and $b\in \mathcal D_s\setminus{\dv^1(s)}$.
We know $y_t^{s,b}(\dv^1)$ is an exponential polynomial of the form \eqref{eq:closed_form}.
Thus, deciding Problem~\ref{prob:prob_switch} reduces to checking if an exponential
polynomial of the form \eqref{eq:closed_form}
in one free variable $t$ has a non-tangential zero in a bounded interval.
We use the following result from~\cite{joel2016}.

\begin{theorem}[\cite{joel2016}]
\label{thm:skolem-joel}
	 Assume $\textbf{SC}$. 
	It is decidable whether an exponential polynomial of the form \eqref{eq:closed_form} has a zero in the interval $(t_1,t_2)$ with $t_1, t_2\in\rats$.
\end{theorem}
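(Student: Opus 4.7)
The function $z_t$ in \eqref{eq:closed_form} is real analytic, so either $z_t \equiv 0$---trivially decidable by checking that each coefficient $c_{j,l}$ is zero---or its zeros in $(t_1,t_2)$ form a finite set. The plan is to reduce the existence question to two separately checkable sub-problems: detecting sign-changing (odd-order) zeros and detecting tangential (even-order) zeros, then to dispatch each under $\textbf{SC}$.

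For sign-changing zeros, I would run interval arithmetic on an increasingly refined dyadic partition of $(t_1,t_2)$. Any sign change between two rational endpoints witnesses a zero via the intermediate value theorem, so such zeros are found after finitely many refinements. The non-trivial direction is the ``no zero'' verdict on a subinterval: this requires an effective lower bound on $|z_t|$ in terms of $t$. Schanuel's conjecture supplies such a bound because the quantities $e^{a_j t},\cos(b_j t),\sin(b_j t)$, evaluated at a rational $t$, have transcendence degree controlled by the $\rats$-linear dependencies among $\{a_j t, b_j t\}$; this forces $|z_t|$ to be bounded below by a computable function of the bit-length of the input whenever $z_t \neq 0$. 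Once the evaluation precision exceeds this bound on a sub-interval, the absence of zeros there is certified.

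For tangential zeros, at any such point $t^*$ one has $z_{t^*}=0$ and $z'_{t^*}=0$ simultaneously (and possibly further derivatives vanish). I would test this joint algebraic system by asking whether there exists $t\in(t_1,t_2)$ with $z_t=z'_t=0$. Since $z'_t$ is itself an exponential polynomial of the form \eqref{eq:closed_form}, the system encodes algebraic relations among $t$ and the transcendentals $e^{a_j t},\cos(b_j t),\sin(b_j t)$. Under $\textbf{SC}$, one can enumerate the finitely many possible $\rats$-linear dependencies among $\{a_j t, b_j t\}$ (parametrized by $t$) and decide, for each, whether it is compatible with a common zero lying in $(t_1,t_2)$. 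This reduces the tangential question to finitely many solvable algebraic feasibility checks.

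The critical step---and the main obstacle---is the tangential case, where no amount of purely numerical evaluation can certify the presence or absence of a zero; the decisive ingredient is the Schanuel-derived transcendence principle for exponential polynomials at rational arguments. An a priori bound on the number of zeros in $(t_1,t_2)$, available via a Rolle-type estimate for exponential polynomials in terms of the number of distinct eigenvalues $\lambda_j$ and their multiplicities $m_j$, guarantees that the bisection and the algebraic feasibility search both terminate after finitely many steps. Combining the two yields an algorithm that halts with a correct verdict conditional on Schanuel's conjecture.
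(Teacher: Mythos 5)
First, note that the paper does not prove Theorem~\ref{thm:skolem-joel} at all: it is imported wholesale from \cite{joel2016}, so your attempt is to be measured against the argument there. Your overall architecture --- bisection certifies sign-changing zeros, an unconditional ``bounded away from zero'' check certifies empty subintervals, and the only obstruction to termination is a point where $z$ and $z'$ vanish simultaneously --- frames the difficulty correctly, and the Rolle-type bound on the number of zeros is fine. But both places where you actually invoke $\textbf{SC}$ contain gaps. First, Schanuel's conjecture is a purely qualitative statement about transcendence degree; it does not yield a computable lower bound on $|z_t|$, and no effective or quantitative version of it is known, so the claimed ``lower bound in terms of the bit-length of the input'' has no justification. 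Fortunately that step does not need $\textbf{SC}$ at all: if $z$ is bounded away from zero on a closed rational subinterval, interval arithmetic at sufficient precision certifies this unconditionally. The conjecture is needed only for the genuinely hard case, which is the second one.

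Your treatment of the tangential case is where the real content is missing. Deciding whether $z_t=z'_t=0$ has a solution in $(t_1,t_2)$ is a mixed algebraic--transcendental system in $t$ and the values $e^{\lambda_j t}$, i.e., exactly the kind of question the theorem is about; calling it ``finitely many solvable algebraic feasibility checks'' restates the problem rather than solving it. Moreover, the natural $\textbf{SC}$ argument --- two independent algebraic relations among $t,e^{\lambda_1 t},\dots,e^{\lambda_k t}$ would push the transcendence degree below the bound $k$ --- fails precisely when the two relations are algebraically \emph{dependent}, which happens whenever the Laurent polynomial $P$ representing $z_t=P(t,e^{a_1t},\dots,e^{\mathbf{i}b_st})$ has a repeated or otherwise degenerate factor (e.g.\ $z=g^2$ satisfies $z=z'=0$ at every zero of $g$, with no contradiction to $\textbf{SC}$). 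This is why the proof in \cite{joel2016} first factors $P$ into irreducible factors of a suitable Laurent polynomial ring, converts each factor into a real-valued function with the same zeros, and only then uses $\textbf{SC}$: irreducibility is what guarantees that the relation coming from $z'=0$ is genuinely independent of the one coming from $z=0$, so the transcendence-degree count actually bites and each irreducible factor has no tangential zeros; after that reduction every remaining zero is a simple crossing and a bisection scheme like yours terminates. Without the factorization step your algorithm cannot handle repeated factors, and without a correct accounting of when the two relations are independent the $\textbf{SC}$ argument does not close. So the proposal isolates the right obstacle but does not overcome it.
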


Theorem~\ref{thm:skolem-joel} decides whether a zero, not necessarily a non-tangential one, exists.
We shall use the characterization of Proposition~\ref{thm:tangential_derivative} to check if a non-tangential zero
of $y_t  := y_t^{s,b}(\dv^1)$ exists in $(0, B)$. 
Define the functions
\begin{equation}
\label{eq:power2}
z_t^k = y_t^2 + \sum_{j=1}^{k}\left(\frac{d^{j}}{dt^{j}}y_t\right)^2, \quad k\in\set{0,1,2,\ldots}.
\end{equation}
\begin{theorem}
\label{thm:application_z}
Fix rational numbers $t_1 < t_2$.
Suppose $y_t$ has a zero in the interval $(t_1,t_2)$ and $y_t$ is not identically zero over this interval. There is $k_0$ as the smallest $k$ such that $z_t^{k}$ in \eqref{eq:power2} does \emph{not} have any zero
in $(t_1, t_2)$. 
Moreover, the zero of $y_t$ in $(t_1, t_2)$ is non-tangential if $k_0$ is odd and is tangential if $k_0$ is even. 
\end{theorem}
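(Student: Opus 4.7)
The plan is to exploit the fact that each $z_t^k$ defined in~\eqref{eq:power2} is a sum of squares, so it vanishes at a point exactly when $y_t$ together with all of its first $k$ derivatives vanish simultaneously there. This converts a question about non-tangential zeros of $y_t$ into a question about the existence of zeros of the auxiliary functions $z_t^k$, the latter being decidable via Theorem~\ref{thm:skolem-joel}.

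First, I would observe that $z_t^k \geq 0$ pointwise, and that $z_\tau^k = 0$ if and only if $y_\tau = \frac{d}{dt}y_t\big|_{t=\tau} = \cdots = \frac{d^k}{dt^k}y_t\big|_{t=\tau} = 0$. Consequently $z_t^k$ has no zero in $(t_1, t_2)$ precisely when, at every zero $\tau \in (t_1,t_2)$ of $y_t$, the order $k^*(\tau)$ of the first nonvanishing derivative (well-defined by Proposition~\ref{thm:tangential_derivative}) satisfies $k^*(\tau) \leq k$.

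Next, I would use that $y_t$ is an exponential polynomial of the form~\eqref{eq:closed_form} and hence entire as a function of $t$. Together with the hypothesis that $y_t$ is not identically zero on $(t_1, t_2)$, analyticity implies that the zeros of $y_t$ are isolated and cannot accumulate anywhere in the compact interval $[t_1,t_2]$; hence $y_t$ has only finitely many zeros in $(t_1, t_2)$, each of finite order. It follows that $k_0 := \max_\tau k^*(\tau)$ is finite, and by the previous paragraph $k_0$ is exactly the smallest $k$ such that $z_t^k$ has no zero on $(t_1, t_2)$, giving both existence and the concrete characterization of $k_0$. Finally, by the definition of $k_0$ there is a zero $\tau^* \in (t_1,t_2)$ of $y_t$ with $k^*(\tau^*) = k_0$, and applying Proposition~\ref{thm:tangential_derivative} at $\tau^*$ yields that $\tau^*$ is tangential when $k_0$ is even and non-tangential when $k_0$ is odd.

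I do not anticipate a hard technical obstacle in the proof itself; the real subtlety is that $k_0$ certifies the tangency type of only one specific (maximal-order) zero, while other zeros of $y_t$ in $(t_1, t_2)$ may have smaller orders of opposite parity. For the intended use in the algorithm of Section~\ref{app:skolem_joel} one must therefore combine this theorem with an interval subdivision strategy and repeated invocations of Theorem~\ref{thm:skolem-joel} in order to reliably detect the existence of a non-tangential zero, rather than rely on a single value of $k_0$ to classify all zeros at once.
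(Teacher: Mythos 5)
Your proof is correct and follows essentially the same route as the paper's: the sum-of-squares observation that $z_\tau^k=0$ forces $y$ and its first $k$ derivatives to vanish at $\tau$, analyticity of the exponential polynomial to guarantee finitely many zeros each of finite order (the paper reaches this via a contradiction argument with the identity theorem, whereas you characterize $k_0$ directly as the maximal vanishing order over the zeros — a minor streamlining), and Proposition~\ref{thm:tangential_derivative} for the parity conclusion. Your closing caveat that $k_0$ only certifies the type of the maximal-order zero matches the paper's own remark and its subsequent subdivision algorithm.
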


Intuitively, the above theorem states that if $y_t$ has at least one zero in $(t_1,t_2)$, we can check for the 
existence of a tangential or non-tangential zero by a finite number of applications of Theorem~\ref{thm:skolem-joel} to functions $z_t^k$ in \eqref{eq:power2}. 
Note that $y_t$ may have both tangential and non-tangential zeros; Theorem~\ref{thm:application_z} gives a way of identifying the type of 
one of the zeros (the one with the largest order). 

\begin{proof}[Proof of Theorem~\ref{thm:application_z}]
Since $y_t$ is an exponential polynomial, so is $z_t^k$ for all $k$. 
Thus, we can use Theorem~\ref{thm:skolem-joel} to check if $z_t^k$ has a zero in $(t_1, t_2)$.
Note that $z_t^k$ is the sum of squares of $\frac{d^{j}}{dt^{j}}y_t$, which means
\begin{equation}
\label{eq:zeros_z}
z_{t^\ast}^k = 0\,\,\Rightarrow\,\, y_{t^\ast}= \frac{dy_t}{dt}\big\vert_{t=t^\ast} = \cdots = \frac{d^{k}y_t}{dt^{k}}\big\vert_{t=t^\ast}  = 0.
\end{equation}
The first part of the theorem is proved by showing that if for each $k$, $z_t^k$ has a zero in $(t_1,t_2)$, 
then $y_t$ is identically zero. 
Suppose $z_{t}^k=0$ for some $t = t^\ast_k$ in the interval $(t_1,t_2)$, for any $k\in\set{0,1,2,\ldots}$. 
Using \eqref{eq:zeros_z}, we get that $y_t=0$ for all $t\in\set{t^\ast_0,t^\ast_1,t^\ast_2,\ldots}$. 
If the set $\set{t^\ast_0,t^\ast_1,t^\ast_2,\ldots}$ is not finite, we get that $y_t$ is identically zero according to the 
identity theorem~\cite{ablowitz_fokas_2003}. 
If the set of zeros is finite, there is some $t^\ast$ that appears infinitely often in the 
sequence $(t^\ast_0,t^\ast_1,t^\ast_2,\ldots)$. 
Therefore, $z_{t^\ast}^k = 0$ for infinitely many indices, which means $\frac{d^{k}y_t}{dt^{k}}\big\vert_{t=t^\ast}  = 0$ for all $k$. 
Having $y_k$ as an analytic function, this again implies that $y_t$ is identically zero.

Since $y_t$ is not identically zero, take $k_0$ such that $z_t^{k_0}$ does not have a zero in $(t_1,t_2)$ but $z_t^{k_0-1}$ does. 
Then, there is $t^\ast\in (t_1,t_2)$ such that $y_t$ and all its derivatives up to order $k_0-1$ are zero at $t^\ast$ 
but $\frac{d^{k_0}}{dt^{k_0}}y_t\big\vert_{t=t^\ast}\ne 0$. 
This $t^\ast$ and $k_0$ satisfy the conditions of Proposition~\ref{thm:tangential_derivative}. 
Thus, $t^\ast$ is a non-tangential zero for $y_t$ if $k_0$ is odd and a tangential zero if $k_0$ is even.
\end{proof}

To check if there is a non-tangential zero in an interval $(0, B)$, we apply Theorem~\ref{thm:application_z}
to each zero of $y_t$ individually.
Suppose $y_t$ has at least one zero. We can localize all zeros of $y_t$ as follows:
\begin{enumerate}
\item Set $(t_1,t_2) := (0,B)$;
\item\label{step:return1} Set $k_0$ to be the smallest index such that $z_t^k$ in \eqref{eq:power2} does not have any zero in $(t_1,t_2)$;
\item\label{step:return2} If $k_0>0$, do the next steps:
\begin{itemize}
\item Use bisection to find an interval $(t',t'')\subset (t_1,t_2)$ such that over this interval, 
$z_t^{k_0-1}$ has a zero and $z_t^{k_0}$ and  $\frac{d^{k_0}}{dt^{k_0}}y_t$  do not have any zero;
\item Store $(t',t'')$;
\item Repeat Steps~\ref{step:return1}-\ref{step:return2} with $(t_1,t_2) := (t_1,t')$;
\item Repeat Steps~\ref{step:return1}-\ref{step:return2} with $(t_1,t_2) := (t'',t_2)$.
\end{itemize}
\end{enumerate}
The bisection used in the above algorithm sequentially splits the interval into two sub-intervals and picks the 
one that contains the zero of $z_t^{k_0-1}$. 
It stops when $\frac{d^{k_0}}{dt^{k_0}}y_t$ does not have any zero over the selected sub-interval. 
The splitting terminates after a finite number of iterations due to the fact that $\frac{d^{k_0}}{dt^{k_0}}y_t$ is 
a continuous function and non-zero at the zero of $y_t$.
The whole algorithm terminates after a finite number of iterations since $y_t$ has a finite number of zeros in $(0,B)$ 
(note that if $y_t$ has infinite number of zeros in $(0,B)$, it will be identically zero 
according to the identity theorem~\cite{ablowitz_fokas_2003}).
The output of the algorithm is a set of intervals.
Within each interval, $y_t$ has a single zero. 
Applying Theorem~\ref{thm:application_z} to each such interval will decide whether the zero is tangential or non-tangential.

	\bibliography{references}



\end{document}